\newtheorem{lemma}{Lemma}
\newtheorem{theorem}{Theorem}
\begin{document}
\title{SUMMeR: Sub-Nyquist MIMO Radar}
\author{David Cohen, Deborah Cohen, \emph{Student IEEE}, Yonina C. Eldar, \emph{Fellow IEEE} and Alexander M. Haimovich, \emph{Fellow IEEE} \thanks{
This project has received funding from the European Union's Horizon 2020 research and innovation program under grant agreement No. 646804-ERC-COG-BNYQ, and from the Israel Science Foundation under Grant no. 335/14. Deborah Cohen is grateful to the Azrieli Foundation for the award of an Azrieli Fellowship. }}
\maketitle


\maketitle

\begin{abstract}
Multiple input multiple output (MIMO) radar exhibits several advantages with respect to traditional radar array systems in terms of flexibility and performance. However, MIMO radar poses new challenges for both hardware design and digital processing. In particular, achieving high azimuth resolution requires a large number of transmit and receive antennas. In addition, the digital processing is performed on samples of the received signal, from each transmitter to each receiver, at its Nyquist rate, which can be prohibitively large when high resolution is needed. Overcoming the rate bottleneck, sub-Nyquist sampling methods have been proposed that break the link between radar signal bandwidth and sampling rate. In this work, we extend these methods to MIMO configurations and propose a sub-Nyquist MIMO radar (SUMMeR) system that performs both time and spatial compression. We present a range-azimuth-Doppler recovery algorithm from sub-Nyquist samples obtained from a reduced number of transmitters and receivers, that exploits the sparsity of the recovered targets' parameters. This allows us to achieve reduction in the number of deployed antennas and the number of samples per receiver, without degrading the time and spatial resolutions. Simulations illustrate the detection performance of SUMMeR for different compression levels and shows that both time and spatial resolution are preserved, with respect to classic Nyquist MIMO configurations. We also examine the impact of design parameters, such as antennas' locations and carrier frequencies, on the detection performance, and provide guidelines for their choice.

\end{abstract}

    \section{Introduction}

Multiple input multiple output (MIMO) \cite{fishler2004mimo} radar, which presents significant potential for advancing state-of-the-art modern radar in terms of flexibility and performance, poses new theoretical and practical challenges. This radar architecture combines multiple antenna elements both at the transmitter and receiver where each transmitter radiates a different waveform. 
Two main MIMO radar architectures are collocated MIMO \cite{li2007mimo} in which the elements are close to each other, and multistatic MIMO \cite{haimovich2008mimo} where they are widely separated. In this work, we focus on collocated MIMO.

Collocated MIMO radar systems exploit the waveform diversity, based on mutual orthogonality of the transmitted signals \cite{li2009mimo}. This generates a virtual array induced by the phase differences between transmit and receive antennas. Such systems thus achieve higher resolution than their phased-array counterparts with the same number of elements and transmissions, contributing to MIMO's popularity. This increased performance comes at the price of higher complexity in the transmitters and receivers design. MIMO radar systems belong to the family of array radars, which allow to recover simultaneously the targets range, Doppler and azimuth. This three-dimensional recovery results in high digital processing complexity.
One of the main challenges of MIMO radar is thus coping with complicated systems in terms of cost, high computational load and complex implementation.

Assuming a sparse target scene, where the ranges, Dopplers and azimuths lie on a predefined grid, the authors in \cite{strohmersparse, strohmersparse2} investigate compressed sensing (CS) \cite{CSBook} recovery for MIMO architectures. CS reconstruction is traditionally proposed to reduce the number of measurements required for the recovery of a sparse signal in some domain. However, in the works above, this framework is not used to reduce the spatial or time complexity, namely the number of antennas and samples, but is rather focused on mathematical guarantees of CS recovery in the presence of noise. To that end, the authors use a dictionary that accounts for every combination of azimuth, range and Doppler frequency on the grid and the targets' parameters are recovered by matching the received signal with dictionary atoms. The processing efficiency is thus penalized by a very large dictionary that contains every parameters combination. 

Several recent works have considered applying CS to MIMO radar to reduce the number of antennas or the number of samples per receiver without degrading resolution. The partial problem of azimuth recovery of targets all in the same range-Doppler bin is investigated in \cite{rossi2014spatial}. There, spatial compression is performed, where the number of antennas is reduced while preserving the azimuth resolution. Beamforming is applied on the time domain samples obtained from the thinned array at the Nyquist rate and the azimuths are recovered using CS techniques. In \cite{yu2010mimo, kalogerias2014matrix, mimoMC}, a time compression approach is adopted where the Nyquist samples are compressed in each antenna before being forwarded to the central unit. While \cite{yu2010mimo} exploits sparsity and uses CS recovery methods, \cite{kalogerias2014matrix, mimoMC} apply matrix completion techniques to recover the missing samples, prior to azimuth-Doppler \cite{kalogerias2014matrix} or range-azimuth-Doppler \cite{mimoMC} reconstruction. However, the authors do not address sampling and processing rate reduction since the compression is performed in the digital domain, after sampling, and the missing samples are reconstructed before recovering the targets parameters.    

In all the above works, the recovery is performed in the time domain on acquired or recovered Nyquist rate samples for each antenna. To reduce the sampling rate while preserving the range resolution, the authors in \cite{bar2014sub} consider frequency domain recovery. Similar ideas have been also used in the context of ultrasound imaging \cite{wagner2012compressed, chernyakova2014fourier}. The work in \cite{bar2014sub} demonstrates low-rate range-Doppler recovery in the context of radar with a single antenna, including sub-Nyquist acquisition and low-rate digital processing. Low-rate data acquisition is based on the ideas of Xampling \cite{mishali2011xampling, SamplingBook}, which consist of an ADC performing analog prefiltering of the signal before taking point-wise samples. Here, the samples are a sub-set of digitally transformed Fourier coefficients of the received signal, that contain the information needed to recover the desired signal parameters using CS algorithms \cite{CSBook}. A practical analog front-end implementing such a sampling scheme in the context of radar is presented in \cite{radar_demo}. To recover the targets range-Doppler from the sub-Nyquist samples, the authors introduce Doppler focusing, which is a coherent superposition of time shifted and modulated pulses. For any Doppler frequency, the received signals from different pulses are combined so that targets with corresponding Doppler frequencies come together in phase. This method improves the signal to noise ratio (SNR) by a factor of the number of pulses.

The work of \cite{bar2014sub} exploits the Xampling framework to break the link between radar signal bandwidth and sampling rate, which defines the time or range resolution. Here, we present the sub-Nyquist MIMO radar (SUMMeR) system, that extends this concept in the context of MIMO radar to break the link between the aperture and the number of antennas, which defines the spatial or azimuth resolution. We consider azimuth-range-Doppler recovery and apply the concept of Xampling both in space (antennas deployment) and in time (sampling scheme) in order to simultaneously reduce the required number of antennas and samples per receiver, without degrading the time and spatial resolution. In particular, we perform spatial and time compression while keeping the same resolution induced by Nyquist rate samples obtained from a full virtual array with low computational cost.

To this end, we express the ``Xamples", or compressed samples, both in time and space, in terms of the targets unknown parameters, namely range, azimuth and Doppler, and show how these can be recovered efficiently from the sub-Nyquist samples. We first focus on range-azimuth recovery and then extend our approach to range-azimuth-Doppler. In both cases, we present necessary conditions on the minimal number of samples and antennas for perfect recovery in noiseless settings. We then derive reconstruction algorithms by extending orthogonal matching pursuit (OMP) \cite{CSBook} for simultaneous sparse matrix recovery in order to solve a system of CS matrix equations. Our algorithm is inspired by matrix sketching \cite{cs_mat_yonina}, where only one matrix equation is considered. Our formulation is more complex as a result of coupling between the parameters. Besides, it involves simultaneous processing of several matrix equations, one per transmitter, to jointly recover the targets' range, azimuth and Doppler. We next show how our SUMMeR system can be enhanced so that the spatial compression does not decrease the detection performance.

An additional advantage of our approach concerns the Nyquist traditional regime both in time and space. In MIMO radar, two of the most popular approaches to ensure waveform orthogonality are code division multiple access (CDMA) and frequency division multiple access (FDMA). Although the narrowband assumption crucial for MIMO processing can hardly be applied to CDMA waveforms \cite{vaidyanathan2008mimo}, it is typically preferred. This is due to two essential drawbacks presented by FDMA: range-azimuth coupling \cite{lesturgie14, cattenoz2015mimo, rabaste2013signal} and limited range resolution to a single waveform's bandwidth \cite{vaidyanathan2009mimo, stralka11}. In this work, we adopt the FDMA framework and show that our processing overcomes these two drawbacks. This approach, as opposed to CDMA, allows to legitimately assume narrowband waveforms, which is key to azimuth resolution. This thus reconciles the trade-off between azimuth and range resolution. This topic is addressed in more details in \cite{cohen2016nyquist}.

The main contributions of our SUMMeR are as follows:
\begin{enumerate}
\item \textbf{Low rate sampling and digital processing} - the unknown targets parameters are recovered from sub-Nyquist samples obtained using Xampling. Both sampling and digital processing are performed at a low rate.
\item \textbf{Reduced number of antennas} - beamforming is performed on the Xamples obtained from a reduced number of transmit and receive antennas while keeping a fixed aperture.
\item \textbf{Scaling with problem size} - we separate the three dimensions (range, azimuth and Doppler) by adapting OMP to matrix form, with several matrix system equations. This avoids the use of a large CS dictionary, where each column corresponds to a range-azimuth-Doppler hypothesis.
\item \textbf{Maximal bandwidth exploitation} - the enhanced version of SUMMeR exploits the frequency bands left vacant by spatial compression for additional transmissions, increasing the detection performance while preserving the total bandwidth.
\item \textbf{Reconciled azimuth and range resolution trade-off} - FDMA waveforms simultaneously allow for narrowband single transmissions for high azimuth resolution and large total bandwidth for high range resolution.
\end{enumerate}

These properties are demonstrated in the simulations, which illustrate range-azimuth-Doppler recovery from low rate samples. In particular, we compare the detection performance of SUMMeR for different time and spatial compression levels with classic MIMO processing of Nyquist samples acquired on a full virtual array. We demonstrate that, under no compression, our FDMA processing achieves the detection performance of classic CDMA, even when wideband effects are neglected for the latter, giving it an advantage. We then demonstrate that our enhanced version of SUMMeR gains back the performance lost due to spatial compression. Last, we investigate the impact of several design parameters such as antennas' locations and transmissions' carrier frequencies on the detection performance and provide guidelines for their choice. 

    This paper is organized as follows. In Section~\ref{sec:classic}, we review the classic MIMO pulse-Doppler radar system and processing. The SUMMeR system is described in Section~\ref{sec:summer}. Section~\ref{sec:algo} introduces our sub-Nyquist sampling scheme and azimuth-range recovery algorithm, extended to range-azimuth-Doppler recovery in Section~\ref{sec:algo2}. Numerical experiments are presented in Section~\ref{sec:sim}.
   
\section{Classic MIMO Radar}
\label{sec:classic}

We begin by describing the classic MIMO radar architecture, in terms of array structure and waveforms, and the corresponding processing.

\subsection{MIMO Architecture}
The traditional approach to collocated MIMO adopts a virtual ULA structure \cite{chen2009signal}, where $R$ receivers, spaced by $\frac{\lambda }{2}$ and $T$ transmitters, spaced by $R\frac{\lambda }{2}$ (or vice versa), form two ULAs. Here, $\lambda$ is the signal wavelength.
Coherent processing of the resulting $TR$ channels generates a virtual array equivalent to a phased array with $TR$ $\frac{\lambda }{2}$-spaced receivers and normalized aperture $Z=\frac{TR}{2}$. This standard array structure and the corresponding virtual array are illustrated in Fig.~\ref{fig:arrays1} for $R=3$ and $T=5$. The blue circles represent the receivers and the red squares are the transmitters.

\begin{figure}[!h]
\begin{center}
\includegraphics[width=0.5\textwidth]{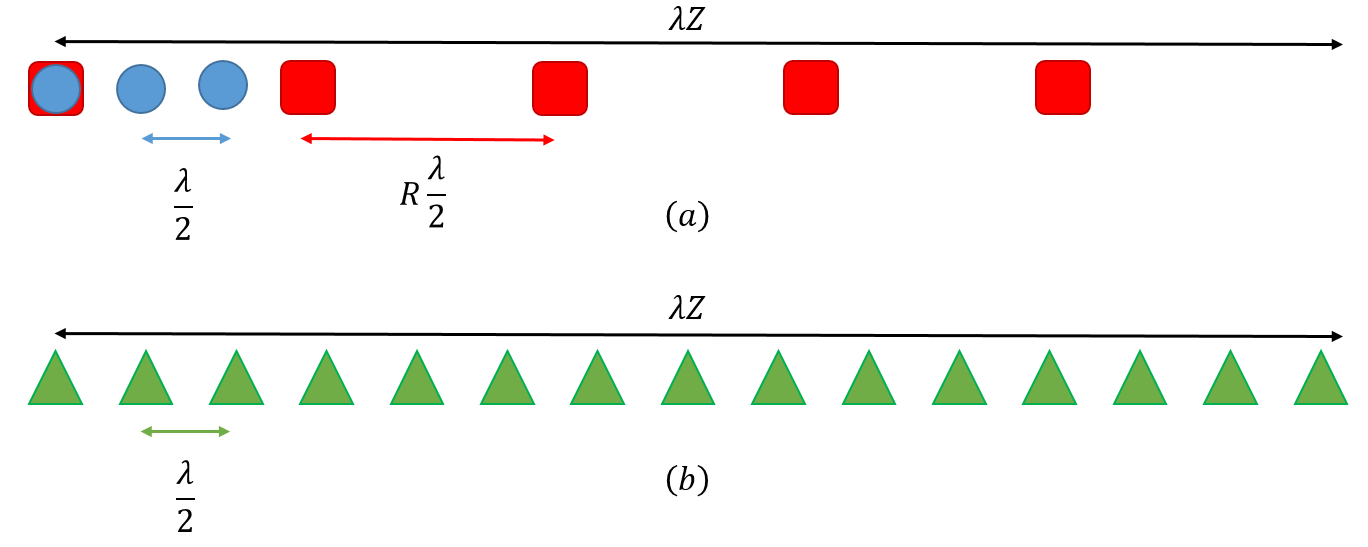}
\caption{Illustration of MIMO arrays: (a) standard array, (b) corresponding receiver virtual array.}
\label{fig:arrays1}
\end{center}
\end{figure}

Each transmitting antenna sends $P$ pulses, such that the $m$th transmitted signal is given by 
    \begin{equation}\label{trMth}
  s_m(t) = \sum_{p=0}^{P-1} {{{h}_{m}}\left( t -p\tau \right)}{{e}^{j2\pi {{f}_{c}}t}},\quad0\le t\le P\tau,
    \end{equation}
    where ${{h}_{m}}\left( t \right), 0\le m\le T-1 $ are narrowband and orthogonal pulses with bandwidth $B_h$, modulated with carrier frequency ${{f}_{c}}$. The coherent processing interval (CPI) is equal to $P \tau$, where $\tau$ denotes the pulse repetition interval (PRI). For convenience, we assume that $f_c \tau$ is an integer, so that the delay $e^{-j2\pi f_c \tau p}$ is canceled in the modulation for $0 \leq p \leq P-1$ \cite{peebles2007radar}. The pulse time support is denoted by $T_p$, with $0 < T_p < \tau$.

MIMO radar architectures impose several requirements on the transmitted waveform family. Besides traditional demands from radar waveforms such as low sidelobes, MIMO transmit antennas rely on orthogonal waveforms. In addition, to avoid cross talk between the $T$ signals and form $TR$ channels, the orthogonality condition should be invariant to time shifts, that is $\int_{-\infty }^{\infty }{{{s}_{i}}\left( t \right)s_{j}^{*}\left( t-\tau_0  \right)dt}=\delta \left( i-j \right),$ for $i,j \in \left[0,M-1\right]$ and for all $\tau_0$. This property implies that the orthogonal signals cannot overlap in frequency \cite{vaidyanathan2008mimo}, leading to FDMA. Alternatively, time invariant orthogonality can be approximately achieved using CDMA.

Both FDMA and CDMA follow the general model \cite{cattenoz2013}:
\begin{equation} \label{eq:gen}
h_m(t)=\sum_{u=1}^{N_c} w_{mu} e^{j2\pi f_{mu}t} v(t-u\delta_t),
\end{equation}
where each pulse is decomposed into $N_c$ time slots with duration $\delta_t$. 
Here, $v(t)$ denotes the elementary waveform, $w_{mu}$ represents the code and $f_{mu}$ the frequency for the $m$th transmission and $u$th time slot. The general expression (\ref{eq:gen}) allows to analyze at the same time different waveforms families. In particular, in CDMA, the orthogonality is achieved by the code $\{w_{mu} \}_{u=1}^{N_c}$ and $f_{mu}=0$ for all $1 \leq u \leq N_c$. In FDMA, $N_c=1$, $w_{mu}=1$ and $\delta_t =0$. The center frequencies $f_{mu}=f_m$ are chosen in $[ -\frac{TB_h}{2}, \frac{TB_h}{2} ]$ so that the intervals $[{{f}_{m}}-\frac{{{B}_{h}}}{2},{{f}_{m}}+\frac{{{B}_{h}}}{2}]$ do not overlap. For simplicity of notation, $\{h_m(t)\}_{m=0}^{T-1}$ can be considered as frequency-shifted versions of a low-pass pulse $v(t)=h_0(t)$ whose Fourier transform $H_0\left( \omega  \right)$ has bandwidth $B_h$, such that
    \begin{equation}
    {{H}_{m}}\left( \omega  \right)={{H}_{0}}\left( \omega -2\pi {{f}_{m}} \right).
    \end{equation}
We adopt a unified notation for the total bandwidth $B_{\text{tot}}=TB_h$ for FDMA and $B_{\text{tot}}=B_h$ for CDMA.

Consider $L$ non-fluctuating point-targets, according to the Swerling-0 model \cite{skolnik}. 
Each target is identified by its parameters: radar cross section (RCS) $\tilde{\alpha}_l$, distance between the target and the array origin or range $R_l$, velocity $v_l$ and azimuth angle relative to the array $\theta_l$. Our goal is to recover the targets' delay $\tau_l=\frac{2R_l}{c}$, azimuth sine $\vartheta_l=\sin (\theta_l)$ and Doppler shift $f_l^D = \frac{2v_l}{c}f_c$ from the received signals. In the sequel, the terms range and delay are interchangeable, as well as azimuth angle and sine, and velocity and Doppler frequency, respectively.


The following assumptions are adopted on the array structure and targets' location and motion, leading to a simplified expression for the received signal.
\begin{itemize}
\item[\textbf{A1}] Collocated array - target RCS $\tilde{\alpha}_l$ and $\theta_l$ are constant over the array (see \cite{haim2006} for more details)
\item[\textbf{A2}] Far targets - target-radar distance is large compared to the distance change during the CPI, which allows for constant $\tilde{\alpha}_l$,
\begin{equation}
v_l P \tau \ll \frac{c \tau_l}{2}.
\end{equation}
\item[\textbf{A3}] Slow targets - low target velocity allows for constant $\tau_l$ during the CPI,
\begin{equation} \label{eq:A3_1}
\frac{2 v_l P \tau}{c} \ll \frac{1}{B_{\text{tot}}},
\end{equation}
and constant Doppler phase during pulse time $T_p$,
\begin{equation} \label{eq:A3_2}
f_l^D T_p \ll 1.
\end{equation}
\item[\textbf{A4}] Low acceleration - target velocity $v_l$ remains approximately constant during the CPI, allowing for constant Doppler shift $f^D_l$,
\begin{equation}
\dot{v_l} P \tau \ll \frac{c}{2f_cP\tau}.
\end{equation}
\item[\textbf{A5}] Narrowband waveform - small aperture allows $\tau_l$ to be constant over the channels,  
\begin{equation} \label{eq:A1}
\frac{2Z \lambda}{c} \ll \frac{1}{B_{\text{tot}}}.
\end{equation}
\end{itemize}

\subsection{Received Signal}
The transmitted pulses are reflected by the targets and collected at the receive antennas. Under assumptions \textbf{A1}, \textbf{A2} and \textbf{A4}, the received signal $\tilde{x}_q(t)$ at the $q$th antenna is then a sum of time-delayed, scaled replica of the transmitted signals:
    \begin{equation}
    {\tilde{x}_{q}}\left( t \right) =  \sum\limits_{m=0}^{T-1} \sum\limits_{l=1}^{L} {\tilde{\alpha}_l {{s}_{m}}\left( \frac{c+v_l}{c-v_l} \left(t-\frac{R_{l,mq}}{c+v_l} \right)  \right)},
    \end{equation}
where $R_{l,mq}=2R_l-(R_{lm}+R_{lq})$, with $R_{lm}=\lambda \xi_m  \vartheta_l$ and $R_{lq}= \lambda \zeta_q \vartheta_l$ accounting for the array geometry, as illustrated in Fig.~\ref{fig:geom}.
\begin{figure}[!h]
\begin{center}
\includegraphics[width=0.4\textwidth]{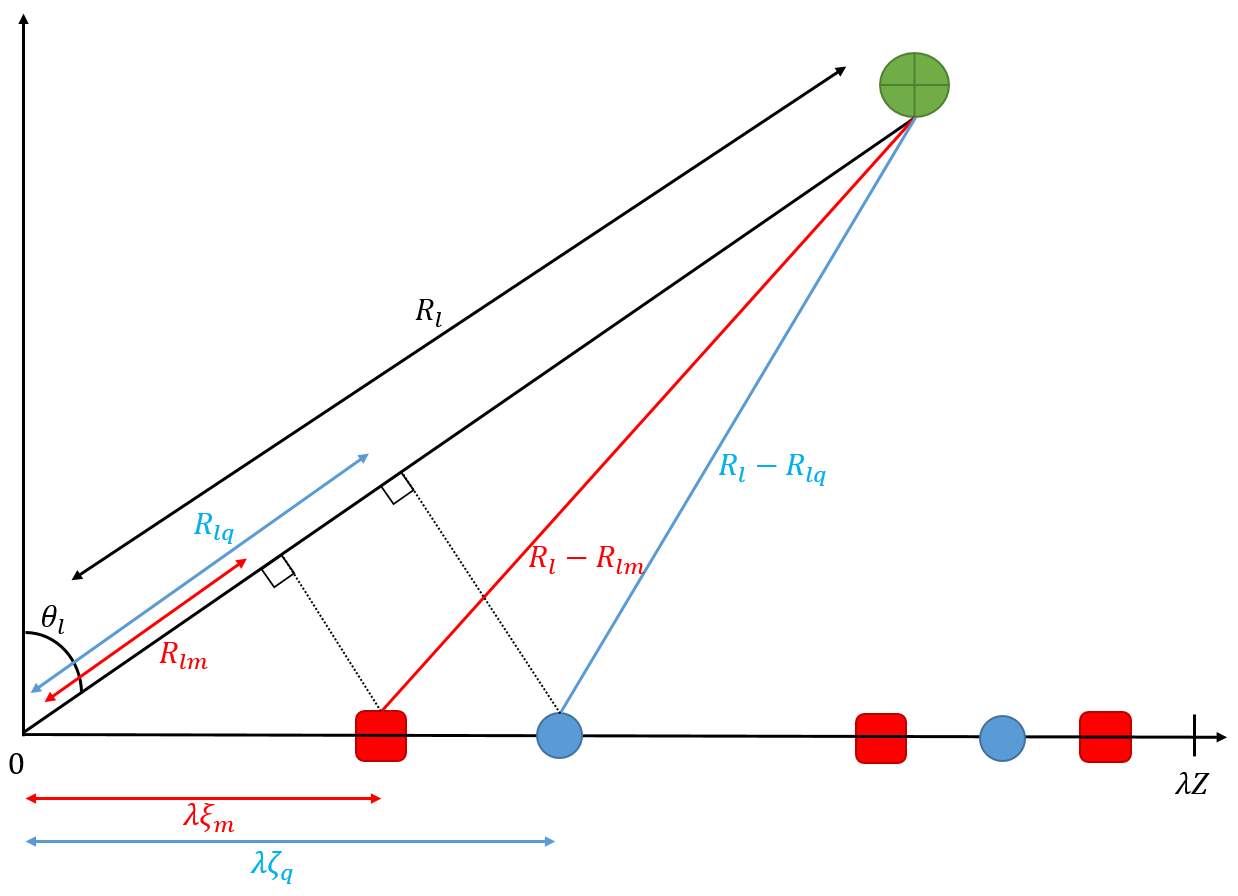}
\caption{MIMO array configuration.}
\label{fig:geom}
\end{center}
\end{figure}
The received signal expression is further simplified using the above assumptions. 
We start with the envelope $h_m(t)$ and consider the $p$th frame and the $l$th target. From $c \pm v_l \approx c$ and neglecting the term $\frac{2v_lt}{c}$ from \textbf{A3} (\ref{eq:A3_1}), we obtain
\begin{equation} \label{eq:simp1}
h_m \left( \frac{c+v_l}{c-v_l} \left(t-p\tau -\frac{R_{l,mq}}{c+v_l} \right) \right) = h_m(t-p\tau - \tau_{l,mq}).
\end{equation}
Here, $\tau_{l,mq}=\tau_l-\eta_{mq}\vartheta_l$ where $\tau_l=\frac{2R_l}{c}$ is the target delay and $\eta_{mq}=(\xi_m + \zeta_q) \frac{\lambda}{c}$ follows from the respective locations between transmitter and receiver.
We then add the modulation term of $s_m(t)$. Again using $c \pm v_l \approx c$, the remaining term is given by
\begin{equation} \label{eq:term2}
h_m(t-p\tau - \tau_{l,mq}) e^{j2\pi (f_c+f_l^D) \left(t-\tau_{l,mq} \right)}.
\end{equation}
After demodulation to baseband and using \textbf{A3} (\ref{eq:A3_2}), we further simplify (\ref{eq:term2}) to
\begin{equation} \label{eq:term3}
h_m(t-p\tau - \tau_{l,mq}) e^{-j2 \pi f_c \tau_l} e^{j2 \pi f_c \eta_{mq}\vartheta_l} e^{j2 \pi f^D_l p\tau}.
\end{equation}
The three phase terms in (\ref{eq:term3}) corresponds to the target delay, azimuth and Doppler frequency, respectively.
Last, from the narrowband assumption on $h_m(t)$ and \textbf{A5} (\ref{eq:A1}), the delay term $\eta_{mq}\vartheta_l$, that stems from the array geometry, is neglected in the envelope, which becomes
\begin{equation} \label{eq:term_1}
 h_m(t-p\tau - \tau_l).
\end{equation}
Substituting (\ref{eq:term_1}) in (\ref{eq:term3}), the received signal at the $q$th antenna after demodulation to baseband is given by
 \begin{equation} \label{eq:rec_sig1}
 x_q \left( t \right) = \sum\limits_{p=0}^{P-1} \sum\limits_{m=0}^{T-1} \sum\limits_{l=1}^{L}  \alpha_l h_m \left( t- p\tau  - \tau _{l} \right) e^{j2 \pi f_c \eta_{mq}\vartheta_l} e^{j2 \pi f^D_l p\tau},
\end{equation}
where $\alpha_l = \tilde{\alpha}_l e^{-j2 \pi f_c \tau_l}$.
In CDMA, the narrowband assumption on the waveforms $h_m(t)$ limits the total bandwidth $B_{\text{tot}}=B_h$, leading to a trade-off between time and spatial resolution \cite{vaidyanathan2008mimo}. In the next section, we show that in FDMA, this assumption can be relaxed with respect to the single bandwidth $B_h$, rather than $B_{\text{tot}}=TB_h$.

\subsection{Azimuth-Delay-Doppler Recovery}
Classic collocated MIMO radar processing traditionally includes the following stages:

\begin{enumerate}
  \item \textbf{Sampling:} at each receiver, the signal $x_q(t)$ is sampled at its Nyquist rate $B_{\text{tot}}$.
  \item \textbf{Matched filter:} the sampled signal is convolved with a sampled version of $h_m(t)$, for $0 \leq m \leq T-1$. The time resolution attained in this step is $1/B_h$.
  \item \textbf{Beamforming:} correlations between the observation vectors from the previous step and steering vectors corresponding to each azimuth on the grid defined by the array aperture are computed. The spatial resolution attained in this step is $2/TR$. In FDMA, this step leads to range-azimuth coupling, as illustrated in Section \ref{sec:sim}.
  \item \textbf{Doppler detection:} correlations between the resulting vectors and Doppler vectors, with Doppler frequencies lying on the grid defined by the number of pulses, are computed. The Doppler resolution is $1/P\tau$.
  \item \textbf{Peak detection:} a heuristic detection process is performed on the resulting range-azimuth-Doppler map. For example, the detection can follow a threshold approach \cite{richards2005fundamentals} or select the $L$ strongest points of the map, if the number of targets $L$ is known.
\end{enumerate}

In standard processing, the range resolution is thus governed by the signal bandwidth $B_h$. The azimuth resolution depends on the array aperture and is given by $\frac{2}{TR}$. Therefore, higher resolution in range and azimuth requires higher sampling rate and more antennas. The total number of samples to process, $NTRP$, where $N=\tau B_h$, can then become prohibitively high.
In order to break the link between time resolution and sampling rate on the one hand, and spatial resolution and number of antennas on the other hand, we propose to apply the Xampling framework \cite{mishali2011xampling} to both time (sampling scheme) and space (antennas deployment). 
Our goal can therefore be expressed more precisely as the estimation of the targets range, azimuth and velocities, i.e. ${{\tau}_{l}}$, ${{\vartheta }_{l}}$ and $f_l^D$ in (\ref{eq:rec_sig1}), while reducing the number of samples, transmit and receive antennas.

In this work, we adopt the FDMA approach, in order to exploit the narrowband property of the transmitted waveforms. Classic FDMA presents two main drawbacks. First, due to the linear relationship between the carrier frequency and the index of antenna element, a strong range-azimuth coupling occurs \cite{lesturgie14, cattenoz2015mimo, rabaste2013signal}. To resolve this aliasing issue, the authors in \cite{rias} use random carrier frequencies, which creates high sidelobe level. This can be mitigated by increasing the number of transmit antennas. The second drawback of FDMA is that the range resolution is limited to a single waveform's bandwidth, namely $B_h$, rather than the overall transmit bandwidth $B_{\text{tot}}=TB_h$ \cite{stralka11, vaidyanathan2009mimo}. Here, we overcome these two drawbacks. First, to resolve the coupling issue, we randomly distribute the antennas, while keeping the carrier frequencies on a grid with spacing $B_h$. Second, by coherently processing all the channels together, we achieve a range resolution of $B_{\text{tot}}=TB_h$. This way, we exploit the overall received bandwidth that governs the range resolution, while maintaining the narrowband assumption for each channel, which is key to the azimuth resolution. Our approach, presented in Sections \ref{sec:algo} and \ref{sec:algo2}, can be applied both in Nyquist and sub-Nyquist regimes, in time and space. Further discussion about FDMA versus CDMA can be found in \cite{cohen2016nyquist}.

\section{SUMMeR System Model}
\label{sec:summer}

The SUMMeR system implements compression in both space and time, reducing the number of antennas as well as the number of samples acquired by each receiver, while preserving range and azimuth resolution. We begin by describing the spatial compression. Time compression is introduced in Section \ref{sec:algo}.

Consider a collocated MIMO radar system with $M<T$ transmit antennas and $Q<R$ receive antennas, whose locations are chosen uniformly at random within the aperture of the virtual array described above, that is 
$\{\xi_{m}\}_{m=0}^{M-1} \sim \mathcal{U} \left[ 0, Z \right]$ and $\{\zeta _{q}\}_{q=0}^{Q-1} \sim \mathcal{U} \left[ 0,Z \right]$, respectively. Note that, in principle, the antenna locations can be chosen on the ULAs' grid. However, this configuration is less robust to range-azimuth ambiguity and leads to coupling between these parameters in the presence of noise, as shown in \cite{cohen2016nyquist}.
In Section~\ref{sec:algo}, we derive lower bounds on the number of antennas $M$ and $Q$. The spatially thinned array structure is illustrated in Fig.~\ref{fig:arrays2}, for $Q=2$ and $M=3$.

\begin{figure}[!h]
\begin{center}
\includegraphics[width=0.4\textwidth]{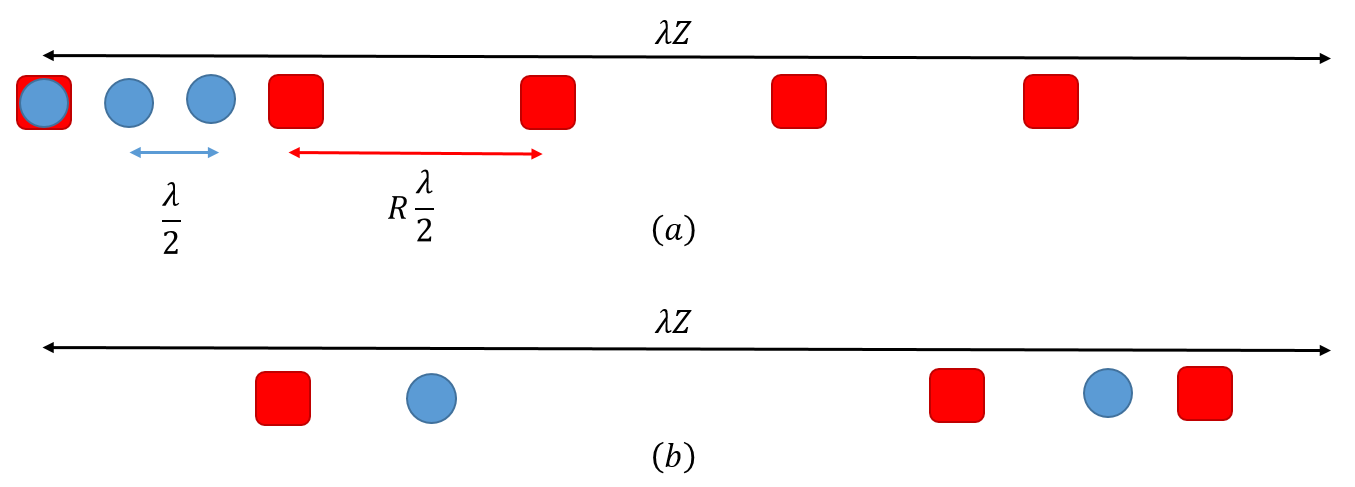}
\caption{Illustration of MIMO arrays: (a) standard array, (b) thinned array.}
\label{fig:arrays2}
\end{center}
\end{figure}

Since we adopt a FDMA framework, spatial compression, which in particular reduces the number of transmit antennas, removes the corresponding transmitting frequency bands as well. The transmitted signals are illustrated in Fig.~\ref{fig:specs} in the frequency domain. Figure~\ref{fig:specs}(a) and (b) show a standard FDMA transmission for $T=5$ and the resulting signal after spatial compression for $M=3$.

\begin{figure}[!h]
\begin{center}
\includegraphics[width=0.45\textwidth]{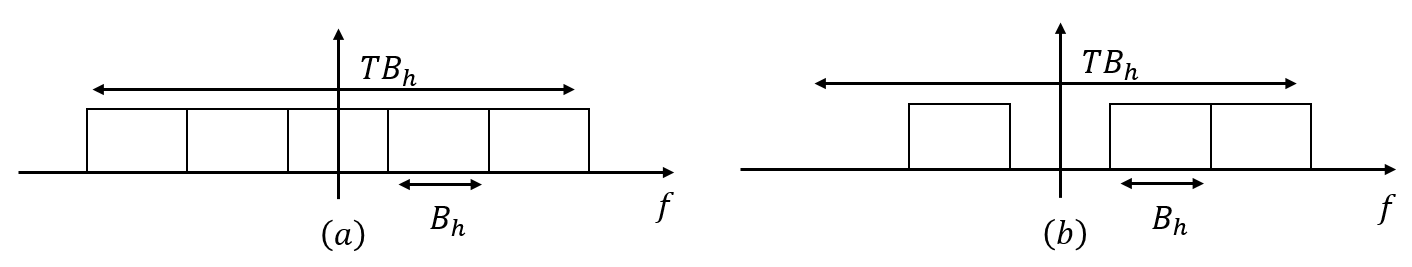}
\caption{FDMA transmissions: (a) standard, (b) spatial compression.}
\label{fig:specs}
\end{center}
\end{figure}

Our processing, described in Sections \ref{sec:algo} and \ref{sec:algo2}, allows to soften the strict neglect of the delay term in the transition from (\ref{eq:term3}) to (\ref{eq:term_1}). We only remove $\eta_{mq}\vartheta_l$ from the envelope $h_0(t)$, that stems from the array geometry. Then, (\ref{eq:term_1}) becomes
\begin{equation} \label{eq:term1_1}
 h_m(t-p\tau - \tau_l)e^{j2\pi f_m \eta_{mq} \vartheta_l}.
\end{equation}
Here, the restrictive assumption \textbf{A5} (\ref{eq:A1}) is relaxed to $\frac{2Z \lambda}{c} \ll \frac{1}{B_h}$. We recall that in CDMA, (\ref{eq:A1}) leads to a trade-off between azimuth and range resolution, by requiring either small aperture or small total bandwidth $B_{\text{tot}}$, respectively. Here, using the FDMA framework and less rigid approximation (\ref{eq:term1_1}), we need only the single bandwidth $B_h$ to be narrow, rather than the total bandwidth $B_{\text{tot}}$, eliminating the trade-off between range and azimuth resolution.
The received signal at the $q$th antenna after demodulation to baseband is in turn given by
 \begin{equation} \label{eq:rec_sig}
 x_q \left( t \right) = \sum\limits_{p=0}^{P-1} \sum\limits_{m=0}^{M-1} \sum\limits_{l=1}^{L}  \alpha_l h_m \left( t- p\tau  - \tau _{l} \right) e^{j2 \pi \beta_{mq} \vartheta _l} e^{j 2\pi f^D_l p \tau},
\end{equation}
where $\beta_{mq} = \left( \zeta _{q}+\xi _m \right) \left( f_m \frac{\lambda}{c} +1 \right)$.
It will be convenient to express $x_q(t)$ as a sum of single frames
\begin{equation}
\label{eq:frames}
x_q(t)= \sum_{p=0}^{P-1} x_q^p(t),
\end{equation}
where
\begin{equation}
\label{eq:one_frame}
x_q^p(t)= \sum_{m=0}^{M-1} \sum_{l=1}^{L} \alpha_l h(t-\tau_l - p\tau) e^{j2 \pi \beta_{mq} \vartheta _l} e^{j 2\pi f^D_l p \tau}.
\end{equation}

Our goal is to estimate the targets range, azimuth and velocity, i.e. to estimate ${{\tau}_{l}}$, ${{\vartheta}_{l}}$ and $f^D_l$ from low rate samples of $x_q(t)$, and a small number $M$ and $Q$ of antennas.

\section{Sub-Nyquist Range-Azimuth Recovery}
\label{sec:algo}

To introduce our sampling and processing, we begin by considering the special case of $P=1$, namely a unique pulse is transmitted by each transmit antenna. We describe how the range-azimuth map can be recovered from Xamples in time and space. We further derive necessary conditions on the number of channels and samples per receiver to allow for perfect range-azimuth recovery in noiseless settings. Subsequently, in Section \ref{sec:algo2}, we treat the general case where a train of $P>1$ pulses is transmitted by each antenna, and present a joint range-azimuth-Doppler recovery algorithm from Xamples, as well as recovery conditions.

\subsection{Xampling in Time and Space}
We begin by deriving an expression for the Fourier coefficients of the received signal, and show how the unknown parameters, namely $\tau_l$ and $\vartheta_l$ are embodied in these coefficients. We then briefly explain how the Fourier coefficients can be obtained from low rate samples of the signal.

The received signal $x_q\left( t \right)$ at the $q$th antenna is limited to $t\in \left[ 0,\tau \right]$ and thus can be represented by its Fourier series
    \begin{equation}
    \begin{array}{lll}
    x_q\left( t \right)=\sum\limits_{k\in \mathbb{Z}}{{{c}_{q}}\left[ k \right]{{e}^{-j2\pi kt/ \tau}}},\quad t\in \left[ 0,\tau \right],
    \end{array}
    \end{equation}
    where, for $-\frac{NT}{2} \leq k \leq \frac{NT}{2}-1$, with $N=\tau B_h$,
    \begin{eqnarray}
    \label{eq:coeffq}
   c_q\left[ k \right] &=& \frac{1}{\tau}\int_0^{\tau} x_q\left( t \right)e^{-j2\pi kt/ \tau}dt  \\
   & = &\frac{1}{\tau}\sum\limits_{m=0}^{M-1} \sum\limits_{l=1}^L \alpha_l e^{j2\pi \beta_{mq} \vartheta _l} e^{-j\frac{2\pi}{\tau}k \tau_l} H_m \left(\frac{2 \pi}{\tau}k \right). \nonumber
    \end{eqnarray}

In order to obtain the Fourier coefficients $c_{q}[k]$ in (\ref{eq:coeffq}) from low-rate samples of the received signal $x_q(t)$, we use the sub-Nyquist sampling scheme presented in \cite{bar2014sub, radar_demo}. For each received transmission, Xampling allows one to obtain an arbitrary set $\kappa$, comprised of $K = |\kappa|$ frequency components from $K$ point-wise samples of the received signal after appropriate analog preprocessing. Therefore, $MK$ Fourier coefficients are acquired at each receiver from $MK$ samples, with $K$ coefficients per frequency band or transmission.
    
Once the Fourier coefficients $c_{q}[k]$, for $k \in |\kappa|$, are acquired, we separate them into channels for each transmitter, by exploiting the fact that they do not overlap in frequency. Applying a matched filter, we have
   \begin{eqnarray}
     \tilde{c}_{q,m} \left[ k \right] &=&{{c}_{q}}\left[ k \right]H_{m}^{*}\left( \frac{2\pi }{\tau}k \right)  \\
    & = & \frac{1}{\tau} \left| H_m\left( \frac{2\pi }{\tau}k \right) \right|^2 \sum\limits_{l=1}^{L}\alpha_l e^{j2 \pi \beta_{mq} \vartheta_l} e^{-j\frac{2\pi}{\tau} k \tau_l}. \nonumber
    \end{eqnarray}
Let ${{y}_{m,q}}\left[ k \right]=\frac{\tau}{|H_0\left( \frac{2\pi }{\tau}k \right)|^2}{{\tilde{c}}_{q,m}}\left[ k+{{f}_{m}}\tau \right]$ be the normalized and aligned Fourier coefficients of the channel between the $m$th transmitter and $q$th receiver. Then,
    \begin{equation}
    \label{coffAlligned}
   y_{m,q}[k]=   \sum_{l=1}^{L} \alpha_l e^{j2\pi \beta_{mq} \vartheta_l}e^{-j\frac{2\pi }{\tau}k\tau_l} e^{-j2\pi f_m \tau_l},
       \end{equation}
       for $-\frac{N}{2} \leq k \leq \frac{N}{2}-1$. 
Our goal is then to recover the targets' parameters $\tau_l$ and $\theta_l$ from $ y_{m,q}[k]$.  
    
\subsection{Range-Azimuth Recovery Conditions}

We now derive the minimal number of channels $MQ$ and samples per receiver $MK$ required for perfect range-azimuth recovery from (\ref{coffAlligned}) in a noiseless environment. Theorem \ref{th:cont} considers continuous settings while in Theorem \ref{th:cond}, we assume that the delays and azimuths are confined to the Nyquist grid.

\begin{theorem}
\label{th:cont}
The minimal number of channels required for perfect recovery of $L$ targets in noiseless settings is $MQ \geq 2L$ with a minimal number of $MK \geq 2L$ samples per receiver.
\end{theorem}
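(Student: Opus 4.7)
The plan is to establish $MQ\geq 2L$ and $MK\geq 2L$ as two decoupled identifiability bounds -- one spatial, one temporal -- by reducing each to a Prony/harmonic-retrieval style argument applied to the measurement model (\ref{coffAlligned}).

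For the spatial bound, I would fix the Fourier index $k$ and stack $\{y_{m,q}[k]\}_{(m,q)}$ into a vector $\mathbf{y}^{(k)}\in\mathbb{C}^{MQ}$, absorbing the $k$-dependent phase into an effective amplitude $\tilde{\alpha}_l(k)=\alpha_l e^{-j(2\pi/\tau)k\tau_l}$, so that $\mathbf{y}^{(k)}=\sum_{l=1}^{L}\tilde{\alpha}_l(k)\,\mathbf{a}(\vartheta_l,\tau_l)$, where the spatial steering vector $\mathbf{a}(\vartheta,\tau)\in\mathbb{C}^{MQ}$ has entries $e^{j2\pi\beta_{mq}\vartheta}e^{-j2\pi f_m\tau}$. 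For necessity, I would exhibit two distinct $L$-target configurations producing identical measurements whenever $MQ<2L$: pick $2L$ generic $(\vartheta_l,\tau_l)$ pairs whose steering vectors form a minimally linearly dependent set $\sum_{l=1}^{2L}c_l\mathbf{a}(\vartheta_l,\tau_l)=0$ with all $c_l\neq 0$, then split the indices into two groups of size $L$ to read off two indistinguishable target scenes. Sufficiency for $MQ\geq 2L$ would follow from a Kruskal/Vandermonde argument showing that any $2L$ such steering vectors are linearly independent in general position, so the $L$-sparse representation of $\mathbf{y}^{(k)}$ is unique.

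For the temporal bound $MK\geq 2L$, I would collect, at each receiver $q$, all samples across the $M$ transmitters and the $K$ Fourier coefficients per transmission into the joint index $(m,k)$. The two delay-dependent exponentials combine as $e^{-j2\pi(k/\tau+f_m)\tau_l}$, so the per-receiver data becomes $MK$ samples of a sum of $L$ complex exponentials whose generators encode the unknown delays $\tau_l$. The classical Prony argument then yields necessity of $MK\geq 2L$ by the same Vandermonde ambiguity construction (two different $L$-delay sets become indistinguishable with fewer samples), while sufficiency follows from full column rank of the associated Vandermonde block under distinct $\tau_l$.

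The main obstacle I anticipate is the coupling between $\tau_l$ and $\vartheta_l$: the delay appears both in the intra-pulse Fourier exponent and in the inter-transmitter phase $e^{-j2\pi f_m\tau_l}$, while the azimuth enters through $\beta_{mq}$, which itself depends on $m$ via $f_m$. Consequently the spatial and temporal arguments do not act on two cleanly orthogonal slices of the measurement, and one must verify both that the effective amplitudes $\tilde{\alpha}_l(k)$ remain generically nonzero and that the combined frequencies $\{k/\tau+f_m\}_{(m,k)}$ are pairwise distinct under the prescribed FDMA grid (carriers spaced by $B_h$, Fourier indices spaced by $1/\tau$). I would resolve this with a genericity (zero-measure avoidance) assumption on the target parameters, after which the joint identifiability decouples cleanly into the two stated Prony bounds.
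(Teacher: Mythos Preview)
Your proposal conflates necessity and sufficiency; the paper's theorem (and its proof) establishes only the necessity direction, i.e., lower bounds on $MQ$ and $MK$. More importantly, your slicing arguments for necessity have a genuine gap. For the spatial bound you fix a single Fourier index $k$ and construct an ambiguity among $2L$ steering vectors in $\mathbb{C}^{MQ}$; but the full measurement contains \emph{all} $k\in\kappa$, and since your effective amplitudes $\tilde{\alpha}_l(k)=\alpha_l e^{-j(2\pi/\tau)k\tau_l}$ vary with $k$ in a target-dependent way, an ambiguity at one $k$ need not persist across the whole dataset. Symmetrically, your per-receiver temporal argument treats the data as a sum of $L$ exponentials in the combined index $(m,k)$, but the coefficient $\alpha_l e^{j2\pi\beta_{mq}\vartheta_l}$ depends on $m$ through $\beta_{mq}$, so the model is not a Prony system with constant amplitudes and the classical $2L$ bound does not apply directly. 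Your ``genericity'' caveat does not repair either issue: generic target parameters make the coupling \emph{worse}, not better, because they guarantee that the dimension you discarded carries nontrivial information.

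The paper's proof sidesteps the coupling by specializing to degenerate target scenes rather than slicing the data. For $MK\geq 2L$ it takes all azimuths equal, $\vartheta_l=\vartheta_0$, so that $e^{j2\pi\beta_{mq}\vartheta_0}$ is a known channel-dependent phase that can be normalized out; the residual is exactly a one-dimensional sum of $L$ complex exponentials at the $MK$ distinct frequencies $(k+f_m\tau)/\tau$, for which the classical $2L$ sample lower bound applies. For $MQ\geq 2L$ it dually takes all delays equal, $\tau_l=\tau_0$, so the $k$- and $f_m$-dependent phases factor out and the residual is a sum of $L$ exponentials in the $MQ$ values $\beta_{mq}$. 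In each case the ``other'' parameter dimension is rendered provably redundant, so the full dataset genuinely collapses to the one-dimensional problem. This is the missing idea in your plan: instead of slicing the measurements and hoping the discarded slice adds nothing, choose the target configuration so that it provably adds nothing.
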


\begin{proof}
Since there are no constraints on the delays or azimuths, let us examine the case where all targets have identical azimuth $\vartheta_l = \vartheta_0$. Equation (\ref{coffAlligned}) then becomes
 \begin{equation}
\label{eq:coeff_theta0}
   y_{m,q}[k]=   e^{j2\pi \beta_{mq} \vartheta_0} \sum_{l=1}^{L} \alpha_l e^{-j\frac{2\pi }{\tau}(k+f_m \tau) \tau_l}.
\end{equation}
For each channel, that is combination of the $m$th transmitter and $q$th receiver, we normalize (\ref{eq:coeff_theta0}) as $z_{m,q}[k]=y_{m,q}[k]/e^{j2\pi \beta_{mq} \vartheta_0}$, yielding
\begin{equation}
\label{eq:coeff_theta0n}
   z_{m,q}[k]=  \sum_{l=1}^{L} \alpha_l e^{-j\frac{2\pi }{\tau}(k+f_m \tau) \tau_l}.
\end{equation}
Since $f_m$ are chosen so that the frequency intervals $[{{f}_{m}}-\frac{{{B}_{h}}}{2},{{f}_{m}}+\frac{{{B}_{h}}}{2}]$ do not overlap, there are $MK$ distinct values of $k+f_m \tau$, for $k \in \kappa$ and $0 \leq m \leq M-1$.
If it were possible to solve (\ref{eq:coeff_theta0n}) with less than $MK$ samples, then we could use this to solve the one-dimensional problem of $2L$ delay-amplitude recovery with less than $2L$ samples, in contradiction with \cite{blu2002fir}. Therefore, it holds that $MK \geq 2L$.

Consider now the case where all targets have the same delay $\tau_l = \tau_0$. We obtain
 \begin{equation}
\label{eq:coeff_tau0}
   y_{m,q}[k]=   e^{-j\frac{2\pi }{\tau}k\tau_0} e^{-j2\pi f_m \tau_0} \sum_{l=1}^{L} \alpha_l e^{j2\pi \beta_{mq} \vartheta_l},
       \end{equation}
and after normalization,
\begin{equation}
\label{eq:coeff_tau0n}
  \tilde{z}_{m,q}[k] = \frac{1}{e^{-j\frac{2\pi }{\tau}(k+f_m \tau)\tau_0}} y_{m,q}[k]=  \sum_{l=1}^{L} \alpha_l e^{j2\pi \beta_{mq} \vartheta_l}.
       \end{equation}
Here, the number of distinct values of $\beta_{mq}$ is at most $MQ$. The upper bound can be achieved by an appropriate choice of $\xi_m$ and $\zeta_q$, for $0 \leq m \leq M-1$ and $0\leq q \leq Q-1$.
Applying the same considerations, we infer that $MQ \geq 2L$.
\end{proof}

As in traditional MIMO, suppose we now limit ourselves to the Nyquist grid with respect to the total bandwidth $TB_h$ so that $\tau_l = \frac{\tau}{TN}s_l$, where $s_l$ is an integer satisfying $0 \leq s_l \leq TN-1$ and $\vartheta_l = -1+\frac{2}{TR} r_l$, where $r_l$ is an integer in the range $0 \leq r_l \leq TR -1$.
Let $\mathbf{Y}^m$ be the $K \times Q$ matrix with $q$th column given by $y_{m,q}[k], k \in \kappa$. We can then write $\mathbf{Y}^m$ as
\begin{equation}
\label{eq:model}
\mathbf{Y}^m = \mathbf{A}^m \mathbf{X} \left(\mathbf{B}^m\right)^H.
\end{equation}
Here, $\mathbf{A}^m$ denotes the $K \times TN$ matrix whose $(k,n)$th element is $e^{- j \frac{2 \pi}{TN} \kappa_kn} e^{-j2\pi \frac{f_m}{B_h} \frac{n}{T}}$ with $\kappa_k$ the $k$th element in $\kappa$, $\mathbf{B}^m$ is the $Q \times TR$ matrix with $(q,p)$th element $e^{-j2 \pi \beta_{mq} (-1 +\frac{2}{TR}p)}$ and $(\cdot)^H$ denotes the Hermitian operator. The matrix $\bf X$ is a $TN \times TR$ sparse matrix that contains the values $\alpha_l$ at the $L$ indices $\left( s_l, r_l \right)$.

Our goal is to recover $\bf X$ from the measurement matrices $\mathbf{Y}^m, 0 \leq m \leq M-1$. The time and spatial resolution induced by $\bf X$ are $\frac{\tau}{TN}=\frac{1}{TB_h}$, and $\frac{2}{TR}$, respectively, as in classic CDMA processing. In traditional MIMO, $\bf X$ is recovered from Nyquist rate samples on a full virtual array, which is equivalent to full rank matrices $\bf A$ and $\bf B$, where 
\begin{equation} \label{eq:bigA}
\mathbf{A} = [ \mathbf{A}^{0^T} \, \mathbf{A}^{1^T} \, \cdots \, \mathbf{A}^{(M-1)^T}]^T,
\end{equation}
and 
\begin{equation} \label{eq:bigB}
\mathbf{B} = [ \mathbf{B}^{0^T} \, \mathbf{B}^{1^T} \, \cdots \, \mathbf{B}^{(M-1)^T} ]^T.
\end{equation}

To better grasp the structure of $\bf A$ and $\bf B$, consider the Nyquist regime, with carriers $f_m$ lying on the grid $f_m=m B_h$. In this case, the $(k,n)$th element of $\mathbf{A}^m$ is $e^{- j \frac{2 \pi}{TN} (k+mN)n}$ and $\bf A$ is the $TN \times TN$ Fourier matrix. Note that for a symmetric bandwidth, we choose $f_m = (m - \frac{T}{2})B_h$ and $\bf A$ becomes a Fourier matrix whose odd columns are multiplied by $-1$.
Similarly, assuming that the antenna elements lie on the virtual array illustrated in Fig.~\ref{fig:arrays1}, we have $\beta_{mq}=\frac{1}{2} (q + m R)$, where we used $f_m \frac{\lambda}{c} \ll 1$ to simplify the expression. Then, the $(q,p)$th element of $\mathbf{B}^m$ is $e^{- j \frac{2\pi}{TR} (q+mR)(p-\frac{TR}{2})}$ and $\bf B$ is the $TR \times TR$ Fourier matrix up to scaling and column permutation.
Here, due to the reduction in the number of antennas and samples per receiver, the number of rows of $\bf A$ and $\bf B$, namely $MK$ and $MQ$, respectively, is decreased. In terms of samples, reducing the number of transmitters decreases the number of measurement matrices $\mathbf{Y}^m$, reducing the number of receivers removes the corresponding columns of all matrices $\mathbf{Y}^m$ and reducing the number of samples per channels removes rows of $\mathbf{Y}^m$.

Theorem \ref{th:cond} presents necessary conditions on the minimal number of samples $MK$ and number of channels $MQ$ for perfect recovery of $\bf X$ from (\ref{eq:model}) under the grid assumption. 

\begin{theorem}
\label{th:cond}
The minimal number of channels required for perfect recovery of $\mathbf{X}$ with $L$ targets in noiseless settings is $MQ \geq 2L$ with a minimal number of $MK \geq 2L$ samples per receiver.
\end{theorem}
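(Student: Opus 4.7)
The plan is to mimic the argument of Theorem~\ref{th:cont} but working directly with the matrix equation (\ref{eq:model}). The key observation is that the two necessary bounds decouple by considering degenerate scenes: one in which the azimuth dimension collapses, and one in which the delay dimension collapses. Each degenerate case reduces the matrix CS problem (\ref{eq:model}) to a standard one-dimensional sparse recovery problem, to which the classical $2L$-measurement lower bound can be invoked.

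First I would handle the sample-count bound $MK \geq 2L$. Suppose all $L$ targets share a common azimuth index $r_0$, so $\mathbf{X}$ is supported in a single column and can be written as $\mathbf{X} = \mathbf{x}\mathbf{e}_{r_0}^T$ where $\mathbf{x} \in \mathbb{C}^{TN}$ is $L$-sparse. Then
\begin{equation}
\mathbf{Y}^m = \mathbf{A}^m \mathbf{x} \bigl(\mathbf{B}^m \mathbf{e}_{r_0}\bigr)^H, \qquad 0 \leq m \leq M-1,
\end{equation}
so every column of $\mathbf{Y}^m$ is a known scalar multiple of $\mathbf{A}^m \mathbf{x}$ (the scaling factor being the conjugate of an entry of the $r_0$-th column of $\mathbf{B}^m$, which is of unit modulus). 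Stacking the vectors $\mathbf{A}^m\mathbf{x}$ over $m$ produces a system $\tilde{\mathbf{y}} = \mathbf{A}\mathbf{x}$ with $\mathbf{A}$ defined in (\ref{eq:bigA}) having $MK$ rows. Since unique recovery of every $L$-sparse vector $\mathbf{x}$ from $\mathbf{A}\mathbf{x}$ requires $\mathrm{spark}(\mathbf{A}) > 2L$, and hence at least $2L$ rows, we obtain $MK \geq 2L$.

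The channel-count bound $MQ \geq 2L$ is obtained symmetrically: assume all targets share a common delay index $s_0$, so $\mathbf{X} = \mathbf{e}_{s_0}\boldsymbol{\nu}^T$ with $\boldsymbol{\nu} \in \mathbb{C}^{TR}$ an $L$-sparse vector carrying the target amplitudes over azimuths. Then $\mathbf{Y}^m$ has each of its rows proportional to $\mathbf{B}^m \boldsymbol{\nu}^{*}$, the scalars now being entries of the $s_0$-th column of $\mathbf{A}^m$. Stacking over $m$ yields $\tilde{\mathbf{z}} = \mathbf{B}\boldsymbol{\nu}^{*}$ with $\mathbf{B}$ from (\ref{eq:bigB}) having $MQ$ rows, and the same spark argument forces $MQ \geq 2L$.

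The main obstacle I anticipate is justifying the reduction rigorously, namely that a failure to recover $\mathbf{X}$ under (\ref{eq:model}) in these degenerate scenes implies a failure to recover the one-dimensional sparse vector in a $2L$-sparse line spectral estimation problem of the type treated in \cite{blu2002fir}. The non-overlapping frequency support of the $H_m$ (which already guaranteed $MK$ distinct values of $k + f_m\tau$ in the proof of Theorem~\ref{th:cont}) and the freedom to choose $\xi_m,\zeta_q$ so that the $\beta_{mq}$ are distinct ensure that the stacked matrices $\mathbf{A}$ and $\mathbf{B}$ are genuinely $MK \times TN$ and $MQ \times TR$ Vandermonde-like matrices whose spark can attain the maximum $MK+1$ and $MQ+1$, respectively. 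The rest of the argument is the standard observation that any measurement matrix with fewer than $2L$ rows admits two distinct $L$-sparse solutions, so the reductions above immediately translate into the claimed lower bounds.
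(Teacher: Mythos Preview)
Your argument is correct, but it follows a different route from the paper. The paper does not specialize to degenerate scenes; instead it vectorizes (\ref{eq:model}) via Kronecker products to obtain $\text{vec}(\mathbf{Y}) = \mathbf{C}\,\text{vec}(\mathbf{X})$ with $\mathbf{C}$ the stack of $\mathbf{\bar{B}}^m \otimes \mathbf{A}^m$, and then invokes a separate lemma (Lemma~\ref{lemma:Cspark}, proved in the appendix) establishing that $\text{spark}(\mathbf{C}) = \min\{\text{spark}(\mathbf{A}),\text{spark}(\mathbf{B})\}$. The standard requirement $\text{spark}(\mathbf{C}) > 2L$ then forces both $MK \geq 2L$ and $MQ \geq 2L$ at once. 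Your approach is more elementary---it bypasses Lemma~\ref{lemma:Cspark} entirely and directly mirrors the continuous-case proof of Theorem~\ref{th:cont}---whereas the paper's route delivers something stronger than bare necessity: an exact spark formula for $\mathbf{C}$ in terms of $\mathbf{A}$ and $\mathbf{B}$, which also points to how sufficiency can be achieved by choosing the design parameters so that $\mathbf{A}$ and $\mathbf{B}$ each have full spark.
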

We note that we obtain the same recovery conditions as in the continuous case from Theorem \ref{th:cont}.
\begin{proof}
The observation model (\ref{eq:model}) for $0 \leq m \leq M-1$ can be equivalently written in vector form using the Kronecker product as
\begin{equation} \label{eq:Y_def}
\text{vec}(\mathbf{Y}) \triangleq
\left[ \begin{array}{c}
\text{vec}(\mathbf{Y}^0) \\ \text{vec}(\mathbf{Y}^1) \\ \vdots \\ \text{vec}(\mathbf{Y}^{M-1}) \end{array} \right]
= \left[ \begin{array}{c} \mathbf{\bar{B}}^0 \otimes \mathbf{A}^0 \\ \mathbf{\bar{B}}^1 \otimes \mathbf{A}^1 \\ \vdots \\ \mathbf{\bar{B}}^{M-1} \otimes \mathbf{A}^{M-1} \end{array} \right]
\text{vec}(\mathbf{X}).
\end{equation}
Here $\text{vec}(\mathbf{X})$ is a column vector that vectorizes the matrix $\bf X$ by stacking its columns, $\otimes$ denotes the Kronecker product and $\bf \bar{B}$ is the conjugate of $\bf B$. It follows that $\text{vec}(\mathbf{X})$ is $L$-sparse. Denote 
\begin{equation}
\label{eq:Cdef}
\mathbf{C} = \left[ 
\begin{array}{c} \mathbf{\bar{B}}^0 \otimes \mathbf{A}^0 \\ \mathbf{\bar{B}}^1 \otimes \mathbf{A}^1 \\ \vdots \\ \mathbf{\bar{B}}^{M-1} \otimes \mathbf{A}^{M-1}
\end{array} \right].
\end{equation}
In order to recover $\text{vec}(\mathbf{X})$ from $\text{vec}(\mathbf{Y})$, we require \cite{CSBook}
\begin{equation}
\label{eq:th_cond}
\text{spark} \left( \mathbf{C} \right) > 2L.
\end{equation}

We now state the following lemma whose proof is presented in Appendix \ref{sec:app}.

\begin{lemma}
\label{lemma:Cspark}
Let $\mathbf{A}^m \in \mathbb{C}^{K,N}$ and $\mathbf{B}^m \in \mathbb{C}^{Q,R}$, for $0 \leq m \leq M-1$ with $K \leq N$ and $Q \leq R$. Denote $\mathbf{A} = [ \mathbf{A}^{0^T} \, \mathbf{A}^{1^T} \, \cdots \, \mathbf{A}^{(M-1)^T} ]^T$ and $\mathbf{B} = [ \mathbf{B}^{0^T} \, \mathbf{B}^{1^T} \, \cdots \, \mathbf{B}^{(M-1)^T}]^T$.
Then,
\begin{equation}
\text{spark}(\mathbf{C}) = \min \{ \text{spark}(\mathbf{A}), \text{spark}(\mathbf{B}) \},
\end{equation}
where $\bf C$ is defined in (\ref{eq:Cdef}).
\end{lemma}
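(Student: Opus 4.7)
The plan is to establish $\text{spark}(\mathbf{C}) \le \min\{\text{spark}(\mathbf{A}), \text{spark}(\mathbf{B})\}$ and the reverse inequality separately, working throughout with the equivalence $\mathbf{C}\,\text{vec}(\mathbf{X}) = 0 \Leftrightarrow \mathbf{A}^m \mathbf{X} (\mathbf{B}^m)^H = 0$ for every $m$, which comes from the vec--Kronecker identity already used to derive (\ref{eq:Y_def}).

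For the upper bound I would exhibit rank-one sparse null matrices. Let $s_A = \text{spark}(\mathbf{A})$; by definition there is a vector $\mathbf{u}$ with $\|\mathbf{u}\|_0 = s_A$ such that $\mathbf{A}\mathbf{u}=0$, which by the stacked structure of $\mathbf{A}$ is equivalent to $\mathbf{A}^m \mathbf{u} = 0$ for every $m$. Taking $\mathbf{X} = \mathbf{u}\,\mathbf{e}_r^T$ for any standard basis vector $\mathbf{e}_r$ gives $\mathbf{A}^m \mathbf{X}(\mathbf{B}^m)^H = (\mathbf{A}^m \mathbf{u})(\mathbf{e}_r^T (\mathbf{B}^m)^H) = 0$ for each $m$ while $\|\text{vec}(\mathbf{X})\|_0 = s_A$, so $\text{spark}(\mathbf{C}) \le s_A$. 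The mirror construction $\mathbf{X} = \mathbf{e}_n\mathbf{v}^T$ with $\mathbf{B}\mathbf{v}=0$ yields $\text{spark}(\mathbf{C}) \le \text{spark}(\mathbf{B})$.

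For the lower bound I would run a support argument. Let $\mathbf{X} \ne 0$ achieve $\|\text{vec}(\mathbf{X})\|_0 = \text{spark}(\mathbf{C})$ among nontrivial null matrices, and denote by $\Omega_r$, $\Omega_c$ its row and column supports; clearly $\|\text{vec}(\mathbf{X})\|_0 \ge \max\{|\Omega_r|, |\Omega_c|\}$. Assume for contradiction that $|\Omega_r| < \text{spark}(\mathbf{A})$ and $|\Omega_c| < \text{spark}(\mathbf{B})$. Then the columns of the stacked $\mathbf{A}$ indexed by $\Omega_r$ are linearly independent, and similarly for $\mathbf{B}$ on $\Omega_c$. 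Restricting the equations to these supports yields $\mathbf{A}^m_{\Omega_r}\mathbf{X}_{\Omega_r,\Omega_c}(\mathbf{B}^m_{\Omega_c})^H = 0$ for every $m$, and I would argue that under these independence hypotheses this simultaneous system admits only $\mathbf{X}_{\Omega_r,\Omega_c} = 0$, which contradicts the definition of the supports.

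The hard part is precisely the last implication: individually the submatrices $\mathbf{A}^m_{\Omega_r}$ and $\mathbf{B}^m_{\Omega_c}$ need not have full column rank, so the textbook identity $\text{rank}(\bar{\mathbf{B}}^m \otimes \mathbf{A}^m) = \text{rank}(\mathbf{A}^m)\,\text{rank}(\mathbf{B}^m)$ cannot be invoked per transmitter; the joint rank hypotheses on the stacked matrices must be leveraged across all $m$ simultaneously. To carry this out I would take a rank factorisation $\mathbf{X}_{\Omega_r,\Omega_c} = \sum_i \mathbf{u}_i\mathbf{v}_i^H$ with linearly independent $\{\mathbf{u}_i\}$ and $\{\mathbf{v}_i\}$ and rewrite the equations as $\sum_i (\mathbf{A}^m_{\Omega_r}\mathbf{u}_i)(\mathbf{B}^m_{\Omega_c}\mathbf{v}_i)^H = 0$ for all $m$, then deduce that either some $\mathbf{u}_i$ lies in $\bigcap_m \ker \mathbf{A}^m_{\Omega_r}$ or some $\mathbf{v}_i$ lies in $\bigcap_m \ker \mathbf{B}^m_{\Omega_c}$; either alternative contradicts the linear independence of the restricted column sets already established, closing the argument.
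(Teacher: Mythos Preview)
Your upper bound is fine and matches the paper's construction. The lower bound, however, has a genuine gap precisely where you flag ``the hard part.'' The claimed deduction---that from $\sum_i (\mathbf{A}^m_{\Omega_r}\mathbf{u}_i)(\mathbf{B}^m_{\Omega_c}\mathbf{v}_i)^H = 0$ for every $m$ one may conclude that some $\mathbf{u}_i\in\bigcap_m\ker\mathbf{A}^m_{\Omega_r}$ or some $\mathbf{v}_i\in\bigcap_m\ker\mathbf{B}^m_{\Omega_c}$---is false in general: different blocks $m$ can annihilate different tensor factors, so no single $\mathbf{u}_i$ or $\mathbf{v}_i$ need lie in the intersection of all the kernels. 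Concretely, take $M=2$, $K=Q=1$, $N=R=2$, $\mathbf{A}^0=\mathbf{B}^0=[1\ \ 0]$ and $\mathbf{A}^1=\mathbf{B}^1=[0\ \ 1]$; then $\mathbf{A}=\mathbf{B}=\mathbf{I}_2$ both have spark $3$, yet $\mathbf{X}=\mathbf{e}_2\mathbf{e}_1^T$ satisfies $\mathbf{A}^m\mathbf{X}(\mathbf{B}^m)^H=0$ for $m=0,1$ with $\|\text{vec}(\mathbf{X})\|_0=1$. Here $\Omega_r=\{2\}$, $\Omega_c=\{1\}$, the rank factorisation is $\mathbf{u}_1=\mathbf{v}_1=1$, and both intersected kernels equal $\{0\}$, so your dichotomy fails outright. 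Since in this example $\text{spark}(\mathbf{C})=1<3=\min\{\text{spark}(\mathbf{A}),\text{spark}(\mathbf{B})\}$, the lemma at the stated level of generality is actually false, and the gap cannot be closed without additional structural hypotheses on the blocks $\mathbf{A}^m,\mathbf{B}^m$.

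For comparison, the paper does not use a rank factorisation: it works column-wise on $\mathbf{C}$, groups the $\ell$ columns of a putative minimal dependence by their $\mathbf{A}$-index $u_j$, and then asserts that linear independence of the \emph{stacked} vectors $\mathbf{a}_{g_i}$ forces each grouped $\mathbf{B}$-combination to vanish. That assertion is the exact analogue of your unproven step (independence holds only for the stacks, not block-by-block) and is likewise violated by the example above, so the two routes differ in presentation but stumble at the same point.
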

From Lemma \ref{lemma:Cspark}, (\ref{eq:th_cond}) holds iff
\begin{equation}
\label{eq:guarantees}
\text{spark} \left( \mathbf{A} \right)  > 2L  \quad \text{and} \quad \text{spark} \left( \mathbf{B} \right)  > 2L.
\end{equation}
Here $\bf A$ is of size $MK \times TN$ and $\bf B$ is of size $MQ \times TR$. This in turn leads to both $MK \geq 2L$ and $MQ \geq 2L$.
\end{proof}

Obviously, the design parameters $f_m, \xi_m, \zeta_q, \left| \kappa \right|$ should be chosen so that (\ref{eq:guarantees}) is satisfied. In the simulations, all of these parameters are first chosen at random. Then, deterministic guidelines for their choice are discussed in Section~\ref{sec:sim_param} and in \cite{cohen2016nyquist}.

\subsection{Range-Azimuth Recovery}
To recover the sparse matrix $\bf X$ from the set of equations (\ref{eq:model}), for all $0 \leq m \leq M-1$, where the targets' range and azimuth lie on the Nyquist grid, we would like to solve the following optimization problem
\begin{equation}
\label{eq:opt}
\text{min } ||\mathbf{X}||_0 \text{ s.t. } \mathbf{A}^m \mathbf{X} \left( \mathbf{B}^m\right)^H=\mathbf{Y}^m, \quad 0 \leq m \leq M-1.
\end{equation}
To this end, we extend the matrix OMP from \cite{cs_mat_yonina} to solve (\ref{eq:opt}), as shown in Algorithm \ref{algo:omp}. In the algorithm description, $\text{vec}(\mathbf{Y})$ is defined in (\ref{eq:Y_def}), $\mathbf{d}_t(l)= \left[ (\mathbf{d}_t^0(l))^T \, \cdots \, (\mathbf{d}_t^{M-1}(l))^T \right]^T$ where $\mathbf{d}_t^m(l) = \text{vec}(\mathbf{a}^m_{\Lambda_t(l,1)} ((\mathbf{\bar{b}}^m)^T_{\Lambda_t(l,2)})^T)$ with $\Lambda_t(l,i)$ the $(l,i)$th element in the index set $\Lambda_t$ at the $t$th iteration, and $\mathbf{D}_t= [\mathbf{d}_t(1) \, \dots \, \mathbf{d}_t(t)]$. Here, $\mathbf{a}^m_j$ denotes the $j$th column of the matrix $\mathbf{A}^m$ and it follows that $(\mathbf{b}^m)^T_j$ denotes the $j$th row of the matrix $\mathbf{B}^m$.
Once $\bf X$ is recovered, the delays and azimuths are estimated as
\begin{eqnarray}
\hat{\tau}_l &=& \frac{\tau}{TN}\Lambda_L(l,1), \label{eq:hat_tau} \\ \quad \hat{\vartheta}_l &=& -1+\frac{2}{TR}\Lambda_L(l,2). \label{eq:hat_theta}
\end{eqnarray}

\begin{algorithm}
\caption{OMP for simultaneous sparse matrix recovery}\label{algo:omp} 
\begin{algorithmic}[1]
\qinput Observation matrices $\mathbf{Y}^m$, measurement matrices $\mathbf{A}^m$, $\mathbf{B}^m$, for all $0 \leq m \leq M-1$
\qoutput Index set $\Lambda$ containing the locations of the non zero indices of $\mathbf{X}$, estimate for sparse matrix $\bf \hat{X}$
\State Initialization: residual $\mathbf{R}_0^m=\mathbf{Y}^m$, index set $\Lambda_0=\emptyset$, $t=1$
\State Project residual onto measurement matrices:
$$
\mathbf{\Psi} =\mathbf{A}^H \mathbf{R} \mathbf{B}
$$
where $\mathbf{A}$ and $\mathbf{B}$ are defined in (\ref{eq:bigA}) and (\ref{eq:bigB}), respectively, and $\mathbf{R} = \text{diag} \left( [ \mathbf{R}^0_{t-1} \, \cdots \, \mathbf{R}^{M-1}_{t-1}] \right)$ is block diagonal
\State Find the two indices $\lambda_t = [\lambda_t(1) \quad \lambda_t(2)]$ such that
$$
[\lambda_t(1) \quad \lambda_t(2)] = \text{ arg max}_{i,j} \left| \mathbf{\Psi}_{i,j} \right|
$$
\State Augment index set $\Lambda_t = \Lambda_t  \bigcup \{ \lambda_t \}$
\State Find the new signal estimate
$$
\mathbf{\hat{\alpha}}=[\hat{\alpha}_1 \, \dots \, \hat{\alpha}_t]^T = ( \mathbf{D}_t^T \mathbf{D}_t )^{-1} \mathbf{D}_t^T \text{vec}(\mathbf{Y}) 
$$
\State Compute new residual
$$
\mathbf{R}_t^m= \mathbf{Y}^m- \sum_{l=1}^t \alpha_l \mathbf{a}^m_{\Lambda_t(l,1)} \left( \mathbf{\bar{b}}^m_{\Lambda_t(l,2)} \right)^T$$
\State If $t < L$, increment $t$ and return to step 2, otherwise stop
\State Estimated support set $\hat{\Lambda}= \Lambda_L$ \\
Estimated matrix $\bf \hat{X}$: $\left( \Lambda_L(l,1), \Lambda_L(l,2) \right)$-th component of $\bf \hat{X}$ is given by $\hat{\alpha}_l$ for $l=1, \cdots, L$ while rest of the elements are zero
\end{algorithmic}
\end{algorithm}

Similarly, other CS recovery algorithms, such as FISTA \cite{FISTA_beck, FISTA_yang}, can be extended to our setting, namely to solve (\ref{eq:opt}).

\section{Sub-Nyquist Range-Azimuth-Doppler Recovery}
\label{sec:algo2}

In Section \ref{sec:algo}, we introduced Xampling in time and space for range-azimuth recovery. We now return to our original range-azimuth-Doppler recovery problem.  We begin by explaining how Xampling can be extended to the multi pulse signal (\ref{eq:rec_sig}). We then derive the minimal number of channels, samples per receiver and pulses per transmitter for perfect recovery in noiseless settings. Last, we present our range-azimuth-Doppler recovery algorithm based on the concept of Doppler focusing introduced in \cite{bar2014sub}.

Similarly to the derivations from Section \ref{sec:algo}, the $p$th frame of the received signal at the $q$th antenna, namely $x_q^p(t)$, is represented by its Fourier series
    \begin{equation}
    \begin{array}{lll}
    x_q^p(t)=\sum\limits_{k\in \mathbb{Z}}{{{c}_{q}^p}\left[ k \right]{{e}^{-j2\pi kt/ \tau}}},\quad t\in \left[ p\tau,(p+1)\tau \right],
    \end{array}
    \end{equation}
    where, for $-\frac{NT}{2} \leq k \leq \frac{NT}{2}-1$, with $N=\tau B_h$,
    \begin{equation}
    \label{eq:coeffq2}
    c_q^p\left[ k \right] = \frac{1}{\tau}\sum\limits_{m=0}^{M-1} \sum\limits_{l=1}^L \alpha_l e^{j2\pi \beta_{mq} \vartheta _l} e^{-j\frac{2\pi}{\tau}k \tau_l} e^{j 2 \pi f^D_l p \tau} H_m \left(\frac{2 \pi}{\tau}k \right).
    \end{equation}    
After separation to channels by matched filtering, the normalized and aligned Fourier coefficients ${{y}_{m,q}^p}\left[ k \right]=\frac{\tau}{|H_0\left( \frac{2\pi }{\tau}k \right)|^2}{{\tilde{c}}_{q,m}^p}\left[ k+{{f}_{m}}\tau \right]$, with ${{{\tilde{c}}}_{q,m}^p}\left[ k \right] = c_q^p\left[ k \right]H_{m}^{*}\left( \frac{2\pi }{\tau}k \right)$, are given by

    \begin{equation}
    \label{coffAlligned2}
   y_{m,q}^p[k]=   \sum_{l=1}^{L} \alpha_l e^{j2\pi \beta_{mq} \vartheta_l}e^{-j\frac{2\pi }{\tau}k\tau_l} e^{-j2\pi f_m \tau_l} e^{j 2 \pi f^D_l p \tau},
       \end{equation}
       for $-\frac{N}{2} \leq k \leq \frac{N}{2}-1$. 
The Fourier coefficients $y_{m,q}^p[k]$ of the frames of each channel (\ref{coffAlligned2}) are identical to (\ref{coffAlligned}) except for the additional Doppler term $e^{j 2 \pi f^D_l p \tau}$.

\subsection{Range-Azimuth-Doppler Recovery Conditions}

Theorems \ref{th:cont2} and \ref{th:cond2} consider the minimal number of channels and samples per receiver required for perfect range-azimuth-Doppler recovery. Again, we consider both continuous and discrete settings, where in the latter, we assume that the delays, azimuths and Doppler frequencies lie on the Nyquist grid.

\begin{theorem}
\label{th:cont2}
The minimal number of channels required for perfect recovery of $L$ targets in noiseless settings is $MQ \geq 2L$ with a minimal number of $MK \geq 2L$ samples per receiver and $P \geq 2L$ pulses per transmitter.
\end{theorem}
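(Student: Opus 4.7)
My plan is to mirror the strategy used in the proof of Theorem~\ref{th:cont}, exploiting degenerate target configurations that collapse the three-dimensional range--azimuth--Doppler problem onto each of its coordinate axes in turn, and then invoking the classical $2L$ lower bound for harmonic retrieval from Fourier samples established in \cite{blu2002fir}. Concretely, the three necessary conditions $MK\geq 2L$, $MQ\geq 2L$ and $P\geq 2L$ will each correspond to one such reduction.

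First, to obtain $MK\geq 2L$ and $MQ\geq 2L$, I would fix all targets to share a common Doppler $f^D_l = f^D_0$. Then the factor $e^{j2\pi f^D_0 p\tau}$ in (\ref{coffAlligned2}) is a known multiplicative constant that can be absorbed into $\alpha_l$ (or divided out from $y_{m,q}^p[k]$), and the resulting normalized coefficients are identical in form to (\ref{coffAlligned}). Thus the range--azimuth--Doppler recovery problem with this degenerate configuration is equivalent to the range--azimuth problem of Theorem~\ref{th:cont}, and both bounds $MK\geq 2L$ and $MQ\geq 2L$ carry over unchanged through the same two sub-arguments (one collapsing the azimuth by $\vartheta_l=\vartheta_0$, the other collapsing the delay by $\tau_l=\tau_0$, and in each case reducing to a one-dimensional $2L$-sample recovery problem that contradicts \cite{blu2002fir} if violated).

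Second, for the pulse condition $P\geq 2L$, I would consider the complementary degeneracy where all targets share a common delay $\tau_l=\tau_0$ and a common azimuth $\vartheta_l=\vartheta_0$. Substituting into (\ref{coffAlligned2}) gives, for every $(m,q,k)$,
\begin{equation}
y_{m,q}^p[k] = e^{j2\pi\beta_{mq}\vartheta_0}\,e^{-j\frac{2\pi}{\tau}k\tau_0}\,e^{-j2\pi f_m\tau_0}\sum_{l=1}^{L}\alpha_l\,e^{j2\pi f^D_l p\tau},
\end{equation}
so after dividing out the deterministic prefactor the samples become $\sum_{l}\alpha_l e^{j2\pi f^D_l p\tau}$, indexed only by $p\in\{0,\dots,P-1\}$. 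This is a pure one-dimensional harmonic retrieval problem in $L$ unknown frequencies and amplitudes from $P$ uniformly spaced samples, so the same argument from \cite{blu2002fir} yields $P\geq 2L$; otherwise the degenerate case would already be unsolvable, contradicting recoverability of the general problem.

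I do not anticipate a genuine obstacle here, since the three bounds are each proven by constructing a worst-case configuration and reducing it to an established one-dimensional necessity result; the main subtlety to get right is simply to verify that the absorbed phase factors in each reduction are indeed independent of the summation index being exploited (so that the reduction is valid and does not inadvertently create additional identifiability), which is immediate from the separable product structure of (\ref{coffAlligned2}). As in Theorem~\ref{th:cont}, the result is a necessary condition only; sufficiency (which we do not claim here) would require genericity assumptions on $\{f_m\}$, $\{\xi_m,\zeta_q\}$ and the PRI so that the induced Vandermonde-type systems attain full spark.
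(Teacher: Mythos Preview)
Your proposal is correct and follows essentially the same approach as the paper: both proofs collapse the problem onto each coordinate axis by fixing the remaining two parameters to common values and then invoke the one-dimensional $2L$ lower bound from \cite{blu2002fir}. The only cosmetic difference is that the paper fixes two parameters at once (e.g., common azimuth \emph{and} Doppler for the $MK$ bound), whereas you first fix the Doppler to reduce to Theorem~\ref{th:cont} and then apply its two sub-cases; the resulting degenerate configurations and conclusions are identical.
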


\begin{proof}
Repeating the proof of Theorem \ref{th:cont} for the cases where all targets have the same azimuth and Doppler frequency, we infer that $MK \geq 2L$. Similarly, the case where they all have the same delay and Doppler frequency yields $MQ \geq 2L$.
Consider now the situation where all targets have the same delay $\tau_l = \tau_0$ and azimuth $\vartheta_l=\vartheta_0$. Then,
 \begin{equation}
\label{eq:coeff_f0}
   y_{m,q}^p[k]=   e^{-j\frac{2\pi }{\tau_0}k\tau_l} e^{-j2\pi f_m \tau_0} e^{j2\pi \beta_{mq} \vartheta_0}\sum_{l=1}^{L} \alpha_l e^{j 2 \pi f^D_l p \tau}.
       \end{equation}
For each channel, we normalize (\ref{eq:coeff_f0}) as 
\begin{equation}
\label{eq:coeff_theta0n2}
   z_{m,q}^p[k]=  \frac{y_{m,q}^p[k]}{e^{-j\frac{2\pi }{\tau_0}k\tau_l} e^{-j2\pi f_m \tau_0} e^{j2\pi \beta_{mq} \vartheta_0}} = \sum_{l=1}^{L} \alpha_l e^{j 2 \pi f^D_l p \tau}.
\end{equation}
Applying the same considerations as in the proof of Theorem \ref{th:cont}, we conclude that $P \geq 2L$.
\end{proof}

As in Section \ref{sec:algo}, we next assume that the time delays, azimuths and Doppler frequencies are aligned to a grid. In particular, $\tau_l = \frac{\tau}{TN}s_l$, $\vartheta_l = -1+\frac{2}{TR} r_l$ and $f^D_l = -\frac{1}{2\tau}+\frac{1}{P\tau}u_l$, where $s_l, r_l$ and $u_l$ are integers satisfying $0 \leq s_l \leq TN-1$, $0 \leq r_l \leq TR -1$ and $0 \leq u_l \leq P-1$, respectively.
Let $\mathbf{Z}^m$ be the $KQ \times P$ matrix with $q$th column given by the vertical concatenation of $y_{m,q}^p[k], k \in \kappa$, for $0 \leq q \leq Q-1$. We can then write $\mathbf{Z}^m$ as
\begin{equation}
\label{eq:model2}
\mathbf{Z}^m = \left( \mathbf{\bar{B}}^m \otimes  \mathbf{A}^m \right) \mathbf{X}_D \mathbf{F}^H.
\end{equation}
Here, the $K \times TN$ matrix $\mathbf{A}^m$ and the $Q \times TR$ matrix $\mathbf{B}^m$ are defined in Section \ref{sec:algo} and $\bf F$ denotes the $ P \times P$ Fourier matrix.
The matrix $\mathbf{X}_D$ is a $T^2NR \times P$ sparse matrix that contains the values $\alpha_l$ at the $L$ indices $\left( r_l TN +s_l, u_l \right)$.

Our goal is now to recover $\mathbf{X}_D$ from the measurement matrices $\mathbf{Z}^m, 0 \leq m \leq M-1$. The time, spatial and frequency resolution stipulated by $\mathbf{X}_D$ are $\frac{1}{TB_h}$, $\frac{2}{TR}$ and $\frac{1}{P \tau}$ respectively. 

Theorem \ref{th:cond2} presents necessary conditions on the minimal number of channels $MQ$, samples per receiver $MK$ and pulses per transmitter $P$ for perfect recovery of $\mathbf{X}_D$ from (\ref{eq:model2}) under the grid assumption. 

\begin{theorem}
\label{th:cond2}
The minimal number of channels required for perfect recovery of $\mathbf{X}_D$ with $L$ targets in noiseless settings is $MQ \geq 2L$ with a minimal number of $MK \geq 2L$ samples per receiver and $P \geq 2L$ pulses per transmitter.
\end{theorem}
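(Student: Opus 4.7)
The plan is to extend the vectorization approach used in the proof of Theorem \ref{th:cond} so as to incorporate the Doppler dimension encoded by $\mathbf{F}$. Applying the identity $\text{vec}(\mathbf{P}\mathbf{X}\mathbf{Q}) = (\mathbf{Q}^T \otimes \mathbf{P})\text{vec}(\mathbf{X})$ to (\ref{eq:model2}), together with associativity of the Kronecker product, yields
\[
\text{vec}(\mathbf{Z}^m) = (\mathbf{\bar{F}} \otimes \mathbf{\bar{B}}^m \otimes \mathbf{A}^m)\,\text{vec}(\mathbf{X}_D).
\]
Stacking over $0 \leq m \leq M-1$ produces a single linear system $\text{vec}(\mathbf{Z}) = \mathbf{C}_D\,\text{vec}(\mathbf{X}_D)$, where $\mathbf{C}_D$ has the block-row structure of the matrix $\mathbf{C}$ from (\ref{eq:Cdef}), modified by an additional common Kronecker factor $\mathbf{\bar{F}}$ tensored on the left of each block.

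Since $\mathbf{X}_D$ has exactly $L$ nonzero entries, $\text{vec}(\mathbf{X}_D)$ is $L$-sparse, and uniqueness of the solution in the noiseless setting is equivalent to $\text{spark}(\mathbf{C}_D) > 2L$. The crucial step is then to generalize Lemma \ref{lemma:Cspark} to three Kronecker factors, establishing
\[
\text{spark}(\mathbf{C}_D) = \min\{\text{spark}(\mathbf{A}),\, \text{spark}(\mathbf{B}),\, \text{spark}(\mathbf{F})\},
\]
with $\mathbf{A}$ and $\mathbf{B}$ as in (\ref{eq:bigA}) and (\ref{eq:bigB}). The key observation is that $\mathbf{\bar{F}}$ appears identically in every block, so after a row permutation (which preserves spark) $\mathbf{C}_D$ can be viewed as $\mathbf{\bar{F}} \otimes \mathbf{C}$. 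Combining Lemma \ref{lemma:Cspark} applied to $\mathbf{C}$ with a spark identity for Kronecker products sharing a common factor then yields the three-way minimum.

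To conclude, note that $\mathbf{F}$ is a $P \times P$ Fourier matrix and hence unitary, so $\text{spark}(\mathbf{F}) = P + 1$; the dimensions of the stacked matrices also force $\text{spark}(\mathbf{A}) \leq MK + 1$ and $\text{spark}(\mathbf{B}) \leq MQ + 1$. Requiring $\text{spark}(\mathbf{C}_D) > 2L$ then forces each of the three sparks to exceed $2L$, which in turn gives the three claimed conditions $MK \geq 2L$, $MQ \geq 2L$, and $P \geq 2L$, matching the structure of Theorem \ref{th:cond} with the added Doppler bound.

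The main obstacle will be the three-factor spark identity. The argument should follow the pattern of Appendix \ref{sec:app}: write a nontrivial column dependence of $\mathbf{C}_D$ in the form $\sum_{k',j,i} \alpha_{k',j,i} (\bar{\mathbf{f}}_{k'} \otimes \mathbf{c}_{(j,i)}) = 0$, group terms by the Doppler index $k'$, and combine a minimum-counting argument driven by $\text{spark}(\mathbf{F})$ with Lemma \ref{lemma:Cspark} applied to the surviving dependences in $\mathbf{C}$. The fact that $\mathbf{F}$ is common to all $M$ blocks, rather than being stacked across blocks like $\mathbf{A}$ and $\mathbf{B}$, is what makes the reduction from three factors to the two-factor case tractable.
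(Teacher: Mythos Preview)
Your proposal is correct and follows essentially the same route as the paper: vectorize, reduce unique recovery to a spark condition on the stacked Kronecker matrix, and invoke Lemma~\ref{lemma:Cspark} together with the fact that the Doppler factor has full spark $P+1$. The one cosmetic difference is that the paper first right-multiplies (\ref{eq:model2}) by $\mathbf{F}$, replacing the common factor $\bar{\mathbf{F}}$ by $\mathbf{I}_P$ and then applies Lemma~\ref{lemma:Cspark} twice (once with the constant left factor $\mathbf{I}_P$, yielding the stacked matrix $\tilde{\mathbf{I}}_P$), whereas you keep $\bar{\mathbf{F}}$ and use the row permutation $\mathbf{C}_D \cong \bar{\mathbf{F}} \otimes \mathbf{C}$ together with the standard Kronecker spark identity; both paths give $\text{spark}(\mathbf{C}_D)=\min\{\text{spark}(\mathbf{A}),\text{spark}(\mathbf{B}),P+1\}$ and hence the same bounds.
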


\begin{proof}
The observation model (\ref{eq:model2}) can be equivalently written as
\begin{equation}
\label{eq:model2F}
\mathbf{Z}^m \mathbf{F} = P \left( \mathbf{\bar{B}}^m \otimes  \mathbf{A}^m \right) \mathbf{X}_D,
\end{equation}
or in vector form,
\begin{equation}
\text{vec}(\mathbf{ZF}) \triangleq
\left[ \begin{array}{c}
\text{vec}(\mathbf{Z}^0\mathbf{F}) \\ \text{vec}(\mathbf{Z}^1\mathbf{F}) \\ \vdots \\ \text{vec}(\mathbf{Z}^{M-1}\mathbf{F}) \end{array} \right]
= P \mathbf{C}_D \text{vec}(\mathbf{X}_D),
\end{equation}
where
\begin{equation}
\label{eq:Cdef2}
\mathbf{C}_D =
\left[ \begin{array}{c} \mathbf{I}_P \otimes \mathbf{\bar{B}}^0 \otimes \mathbf{A}^0 \\ \mathbf{I}_P \otimes \mathbf{\bar{B}}^1 \otimes \mathbf{A}^1 \\ \vdots \\ \mathbf{I}_P \otimes \mathbf{\bar{B}}^{M-1} \otimes \mathbf{A}^{M-1} \end{array} \right],
\end{equation}
and $\mathbf{I}_P$ denotes the $P \times P$ identity matrix.
In order to recover the $L$-sparse vector $\text{vec}(\mathbf{X}_D)$ from $\text{vec}(\mathbf{Z})$, we require  $\text{spark} \left( \mathbf{C}_D \right) > 2L$ \cite{CSBook}.

Applying Lemma \ref{lemma:Cspark} twice, we obtain
\begin{equation}
\text{spark}(\mathbf{C}_D) = \min \{ \text{spark}(\mathbf{A}), \text{spark}(\mathbf{B}),  \text{spark}(\mathbf{\tilde{I}}_P) \},
\end{equation}
where $\mathbf{\tilde{I}}_P$ is the $MP \times P$ matrix which vertically concatenates $M$ times the matrix $\mathbf{I}_P$. Obviously, $\text{spark}(\mathbf{\tilde{I}}_P) = \text{spark}(\mathbf{I}_P)$.
Therefore, (\ref{eq:th_cond}) holds iff
\begin{equation}
\label{eq:guarantees2}
\text{spark} \left( \mathbf{A} \right)  > 2L, \quad \text{spark} \left( \mathbf{B} \right)  > 2L \quad \text{and} \quad \text{spark} \left( \mathbf{I}_P \right)  > 2L,
\end{equation}
which in turn leads to $MK \geq 2L$, $MQ \geq 2L$ and $P \geq 2L$.
\end{proof}

\subsection{Range-Azimuth-Doppler Recovery}

To recover jointly the range, azimuth and Doppler frequency of the targets, we apply the concept of Doppler focusing from \cite{bar2014sub} to our setting. Once the Fourier coefficients (\ref{coffAlligned2}) are acquired and processed, we perform Doppler focusing for a specific frequency $\nu$, that is
    \begin{eqnarray}
\label{eq:dop_reduced}
   \Phi^{\nu}_{m,q}[k] &=&\sum_{p=0}^{P-1} y_{m,q}^p[k] e^{-j2\pi \nu p \tau} \\
 &=&   \sum_{l=1}^{L} \alpha_l e^{j2\pi \beta_{mq} \vartheta_l}e^{-j\frac{2\pi }{\tau}(k+f_m \tau)\tau_l} \sum_{p=0}^{P-1} e^{j 2 \pi (f^D_l -\nu) p\tau}, \nonumber
       \end{eqnarray}
       for $-\frac{N}{2} \leq k \leq -\frac{N}{2}-1$.
Following the same argument as in \cite{bar2014sub}, it holds that
\begin{equation}
\label{eq:dop_foc}
\sum_{p=0}^{P-1} e^{j 2 \pi (f^D_l -\nu) p\tau} \cong \left\{ \begin{array}{ll} P & |f^D_l -\nu| < \frac{1}{2P\tau}, \\
0 & otherwise. \end{array} \right.
\end{equation}
Therefore, for each focused frequency $\nu$,  (\ref{eq:dop_reduced}) reduces to (\ref{coffAlligned}) and the resulting CS problem to solve is exactly (\ref{eq:model}), for $0 \leq m \leq M-1$.
We note that Doppler focusing increases the SNR by a factor a $P$, as can be seen in (\ref{eq:dop_foc}).

Algorithm \ref{algo:omp2} extends Algorithm \ref{algo:omp} to solve (\ref{eq:model2}) using Doppler focusing. Note that step 1 can be performed using fast Fourier transform (FFT). 
In the algorithm description, $\text{vec}(\mathbf{Z})$ is defined similarly to $\text{vec}(\mathbf{Y})$ in (\ref{eq:Y_def}), $\mathbf{e}_t(l)= \left[ (\mathbf{e}_t^0(l))^T \, \cdots \, (\mathbf{e}_t^{M-1}(l))^T \right]^T$
where $\mathbf{e}_t^m(l) = \text{vec}( (\mathbf{\bar{B}}^m \otimes \mathbf{A}^m)_{\Lambda_t(l,2)TN+\Lambda_t(l,1)} \left((\mathbf{\bar{F}}^m)^T_{\Lambda_t(l,3)} \right)^T)$ with $\Lambda_t(l,i)$ the $(l,i)$th element in the index set $\Lambda_t$ at the $t$th iteration, and $\mathbf{E}_t= [\mathbf{e}_t(1) \, \dots \, \mathbf{e}_t(t)]$.
Once $\mathbf{X}^p$ are recovered, the delays and azimuths are given by (\ref{eq:hat_tau}) and (\ref{eq:hat_theta}), respectively and the Dopplers are estimated as
\begin{equation}
\hat{f}_l^D = -\frac{1}{2\tau}+\frac{1}{P\tau}\Delta_L(l,3).
\end{equation}

\begin{algorithm}
\caption{OMP for simultaneous sparse 3D recovery with focusing}\label{algo:omp2} 
\begin{algorithmic}[1]
\qinput Observation matrices $\mathbf{Z}^{(m,p)}$, measurement matrices $\mathbf{A}^{(m,p)}$, $\mathbf{B}^{(m,p)}$, for all $0 \leq m \leq M-1$ and $0 \leq p \leq P-1$
\qoutput Index set $\Lambda$ containing the locations of the non zero indices of $\mathbf{X}$, estimate for sparse matrix $\mathbf{\hat{X}}$
\State Perform Doppler focusing for $0 \leq i \leq TN$ and $0 \leq j \leq TR$:
$$
\mathbf{\Phi}^{(m,\nu)}_{i,j} = \sum_{p=0}^{P-1} \mathbf{\mathbf{Y}}^{(m,p)}_{i,j} e^{j 2 \pi \nu p \tau}.
$$
\State Initialization: residual $\mathbf{R}_0^{(m,p)}=\mathbf{\Phi}^{(m,p)}$, index set $\Lambda_0=\emptyset$, $t=1$
\State Project residual onto measurement matrices for $0 \leq p \leq P-1$:
$$
\mathbf{\Psi}^p =\mathbf{A}^H \mathbf{R}^p \mathbf{B},
$$
where $\mathbf{A}$ and $\mathbf{B}$ are defined in (\ref{eq:bigA}) and (\ref{eq:bigB}), respectively, and $\mathbf{R}^p = \text{diag} \left( [ \mathbf{R}^{(0,p)}_{t-1} \, \cdots \, \mathbf{R}^{(M-1, p)}_{t-1}] \right)$ is block diagonal
\State Find the three indices $\lambda_t = [\lambda_t(1) \, \lambda_t(2) \, \lambda_t(3)]$ such that
$$
[\lambda_t(1) \quad \lambda_t(2) \quad \lambda_t(3)] = \text{ arg max}_{i,j,\nu} \left| \mathbf{\Phi}^{\nu}_{i,j} \right|
$$
\State Augment index set $\Lambda_t = \Lambda_t  \bigcup \{ \lambda_t \}$
\State Find the new signal estimate
$$
\mathbf{\hat{\alpha}}=[\hat{\alpha}_{1} \, \dots \, \hat{\alpha}_{t}]^T = ( \mathbf{E}_t^T \mathbf{E}_t )^{-1} \mathbf{E}_t^T \text{vec}(\mathbf{Z}) 
$$
\State Compute new residual
$$
\mathbf{R}_t^{(m,p)}= \mathbf{Y}^m- \sum_{l=1}^t \alpha_l e^{j2 \pi \left(-\frac{1}{2}+\frac{ \Lambda_t(l,3)}{P} \right) p} \mathbf{a}^m_{\Lambda_t(l,1)} \left( \mathbf{\bar{b}}^m_{\Lambda_t(l,2)} \right)^T$$
\State If  $t < L$, increment $t$ and return to step 2, otherwise stop
\State Estimated support set $\hat{\Lambda}= \Lambda_L$ \\
Estimated matrix $\mathbf{\hat{X}}_D$: $\left( \Lambda_L(l,2) TN+ \Lambda_L(l,1) ,  \Lambda_L(l,3)\right)$-th component of $\mathbf{\hat{X}}_D$ is given by $\hat{\alpha}_l$ for $l=1, \cdots, L$ while rest of the elements are zero
\end{algorithmic}
\end{algorithm}

\subsection{Multi-Carrier SUMMeR}

We now explain how the frequency bands left vacant can be exploited to increase the system's performance without expanding the total bandwidth of $B_{\text{tot}}=TB_h$, thus preserving assumption \textbf{A3} (\ref{eq:A3_1}) and \textbf{A5} (\ref{eq:A1}). Denote by $\gamma=T/M$ the compression ratio of the number of transmitters. In this configuration, referred to as multi-carrier SUMMeR, each transmit antenna sends $\gamma$ pulses in each PRI. Each pulse belongs to a different frequency band and are therefore mutually orthogonal, such that the total number of user bands is $M\gamma B_h=TB_h$. The $i$th pulse of the $p$th PRI is transmitted at time $i\frac{\tau}{\gamma}+p\tau$, for $0 \leq i < \gamma$ and $0 \leq p \leq P-1$. The samples are then acquired and processed as described above. Besides increasing the detection performance as we show in simulations in Section \ref{sec:sim}, this method multiplies the Doppler dynamic range by a factor of $\gamma$ with the same Doppler resolution since the CPI, equal to $P\tau$, is unchanged. Preserving the CPI allows to preserve the stationary condition on the targets, that is assumptions $\textbf{A2}$, $\textbf{A3}$ (\ref{eq:A3_1}) and $\textbf{A4}$ are still valid.

\subsection{Design Parameters}

To conclude the theoretical section, we provide some guidelines for choosing the design parameters values. From Theorems \ref{th:cond} and \ref{th:cond2}, the minimal number of channels to recover $2L$ targets is $MQ=2L$. To minimize the total number of antennas $M+Q$, we choose $M, Q \in \mathbb{N}$, such that $\sqrt{2L}-1 \leq M,Q \leq \sqrt{2L}+1$ and $MQ \geq 2L$. The number of samples per channel should then be at least $K \geq 2L/M$ with $P \geq 2L$ pulses. Obviously, these numbers are lower bounds and should be increased in the presence of noise. 
\section{Simulations}
\label{sec:sim}

In this section, we present some numerical experiments illustrating both our range-azimuth and range-azimuth-Doppler recovery approaches. We compare our method with classic MIMO processing and examine the impact of the choice of several design parameters on the detection performance.

\subsection{Preliminaries}

Throughout the experiments, the standard MIMO system is based on a virtual array, as depicted in Fig.~\ref{fig:arrays2}(a), which would be generated by $T=20$ transmit antennas and $R=20$ receive antennas, yielding an aperture $\lambda Z=6m$. The SUMMeR system is composed of $M<T$ transmitters and $Q<R$ receivers, with locations generated uniformly at random over the virtual array, as shown in Fig.~\ref{fig:arrays2}(b). We use FDMA waveforms $h_m(t)$ such that $f_m=(i_m-\frac{T}{2})B_h$, where $i_m$ are integers chosen uniformly at random in $[0,T]$, and with the following parameters: PRI $\tau=100 \mu sec$, bandwidth $B_h = 5MHz$ and carrier $f_c=10 GHz$. We consider targets from the Swerling-0 model with identical amplitudes and random phases. The received signals are corrupted with uncorrelated additive Gaussian noise (AWGN) with power spectral density $N_0$. The SNR is defined as
\begin{equation}
\text{SNR} = \frac{\frac{1}{T_p} \int_0^{T_p} |h_0(t)|^2 \mathrm{d}t}{N_0  B_h}.
\end{equation}

We consider a hit-or-miss criterion as performance metric. A ``hit" is defined as a range-azimuth estimate which is identical to the true target position up to one Nyquist bin (grid point) defined as $1/TB_h$ and $2/TR$ for the range and azimuth, respectively. In pulse-Doppler settings, a ``hit" is proclaimed if, in addition, the recovered Doppler is identical to the true frequency up to one Nyquist bin of size $1/P\tau$.

\subsection{Numerical Results}

We first consider a sparse target scene with $L=7$ targets including a couple of targets with close ranges and another couple close in azimuth, both up to one grid point. We use $M=10$ transmit antennas and $Q=10$ receive antennas and employ $K=250$ samples per channel instead of $N= B_h \tau = 500$, which corresponds to only $12.5 \%$ of the total number of Nyquist rate samples from the original array.
The SNR is set to $0$dB. Figure \ref{fig:sim1} (left pane) shows the sparse target scene on a range-azimuth map, where each real target is displayed with its estimated location. In the right pane, we demonstrate range-azimuth-Doppler recovery and show the location and velocity of $L=6$ targets, including a couple of targets with close ranges, a couple with close azimuths and another couple with close velocities. Here, the range and azimuth are converted to 2-dimensional $x$ and $y$ locations. The SNR is set to $-10$dB.

\begin{figure}[!h]
\begin{center}
\includegraphics[width=0.2\textwidth]{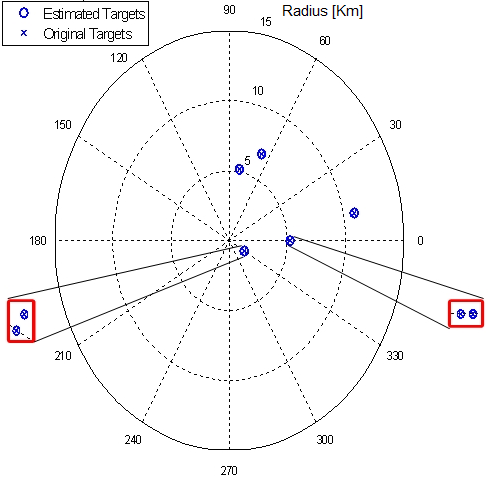}
\includegraphics[width=0.2\textwidth]{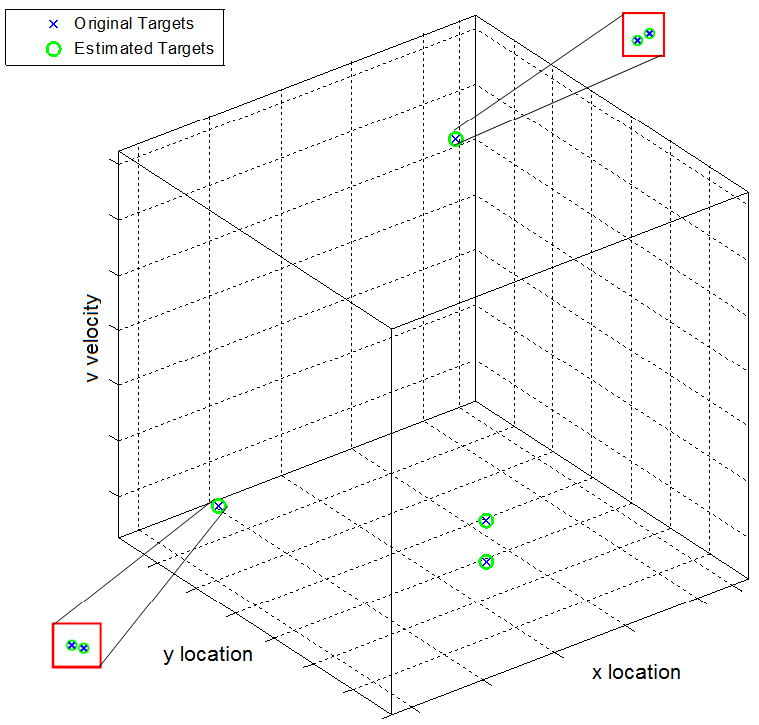}
\caption{Range-azimuth recovery for $L=7$ targets and SNR=$0$dB (left), range-azimuth-Doppler recovery for $L=6$ targets and SNR=$-10$dB (right).}
\label{fig:sim1}
\end{center}
\end{figure}



Next, we investigate the performance of our azimuth-range-Doppler recovery scheme with respect to SNR for different numbers of samples $K$ per channel. We use the same array as described above, with spatial compression of $25\%$, where each transmitter sends $P=10$ pulses,. We consider $L=10$ targets whose locations are generated uniformly at random. Each experiment is repeated over $100$ realizations. Figure~\ref{fig:sim_timeP} presents the range-azimuth-Doppler recovery performance with respect to SNR. The configuration with $K=500$ corresponds to samples obtained at the Nyquist rate. The configuration with $K=125$ results in $1:4$ time compression along with half the number of transmitters and receivers, yielding only $6.25 \%$ of the total number of Nyquist rate samples from the original array. We observe a shift of $3$dB between the consecutive graphs, since the compression between them yield a decrease of half the system power. Similar results are observed for spatial compression.

\begin{figure}[!h]
\begin{center}
\includegraphics[width=0.4\textwidth]{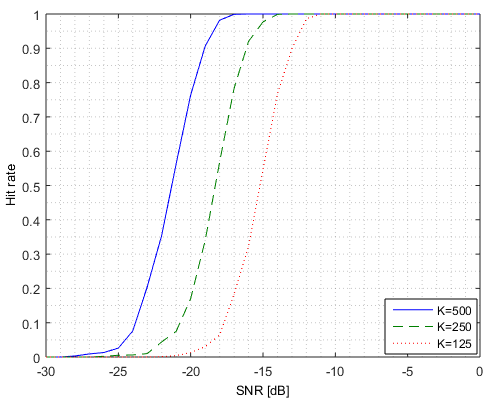}
\caption{Range-azimuth-Doppler recovery performance with time compression.}
\label{fig:sim_timeP}
\end{center}
\end{figure}

%

%

Next, we compare our proposed range-azimuth recovery approach to the classic MIMO processing presented in Section~\ref{sec:classic}. For the classic method, we use bandlimited Gaussian pulses with bandwidth $B_h=100$MHz that are equivalent to the CDMA approach, and we neglect the wideband effects only for CDMA, giving it an advantage. 
Here, the SNR is defined with respect to the CDMA classic processing for the two methods to be comparable, namely 
\begin{equation} \label{eq:snr2}
\text{SNR}=\frac{\frac{M}{T_p} \int_0^{T_p} |h_0(t)|^2 \mathrm{d}t}{N_0  B_{\text{tot}}}.
\end{equation}
We consider $L=2$ close targets in the spatial domain, up to one azimuth grid point. In each simulation, the pair of targets is generated at random. We consider two regimes: the first does not involve any compression, namely $M=T=20$, $Q=R=20$ and $K=N=500$, while in the second, spatial compression is considered with $M=Q=10$. Figure \ref{fig:comp_spat} presents the range-azimuth recovery performance with respect to SNR for both approaches and both regimes. 
We first observe that without compression both our FDMA approach and the conventional CDMA method are equivalent in terms of performance. Here, the wideband effects were neglected only for the classic CDMA method, which is thus given an advantage. In our settings, the delay $\eta_{mq}\vartheta_l$ due to the array geometry cannot be neglected, as in classic CDMA, since the array aperture $\lambda Z=6m$ is only 4 times the range resolution $1/TB_h=1.25m$ so that (\ref{eq:A1}) does not hold.
With wideband effects applied on CDMA, FDMA associated with our processing would then outperform the latter. In the case of $25 \%$ spatial compression in Fig.~\ref{fig:comp_spat}, the azimuth resolution of the classic approach is divided by $4$, whereas the resolution of our approach remains unchanged. Therefore, even at high SNRs, two close targets in the spatial domain cannot be resolved since one can mask the other. Note that in low SNR, the classic processing seems to yield higher performance than ours. This stems from the fact that if the two close targets are in phase, they can produce constructive signals and the classic approach will in fact detect the sum of both. Similar observations can be made under time compression. 

\begin{figure}[!h]
\begin{center}
\includegraphics[width=0.4\textwidth]{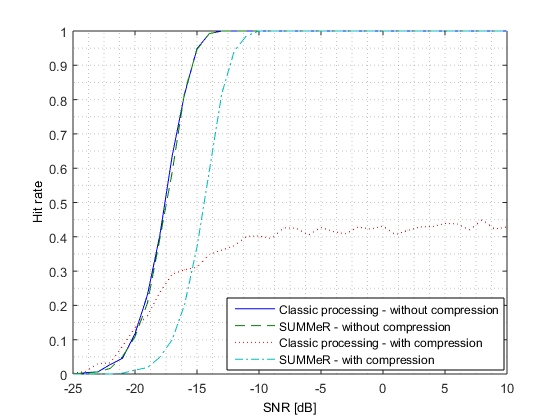}
\caption{Range-azimuth recovery performance of our proposed method vs. classic processing (spatial, or azimuth, resolution).}
\label{fig:comp_spat}
\end{center}
\end{figure}


Last, we illustrate the increased detection performance achieved by the multi-carrier SUMMeR method. We consider two regimes: the first does not involve any compression, while in the second, spatial compression is considered with $M=Q=10$. In the classic and SUMMeR system, $P=10$ pulses are transmitted by each transmit antenna. In multi-carrier SUMMeR, we have $\gamma=2$, leading to $2P=20$ pulses per transmitter. For all configurations, we consider $L=5$ targets. In Fig.~\ref{fig:duplex}, we observe that the multi-carrier approach with spatial compression achieves the same performance as the original SUMMeR and the classic processing with no compression. While using the SNR definition of (\ref{eq:snr2}), the number of transmitters does not affect the detection performance, since we have a system with fixed power. The reduction of the number of receivers decreases the performance by 3dB, which are compensated by the extra transmitted pulses. This shows that the same detection performed can be achieved by keeping the total transmission bandwidth while reducing the number of antennas. In fact, if the transmitters' reduction factor is greater than the receivers', then the detection performance of the multi-carrier SUMMeR system will be higher than that achieved by conventional processing without any compression.
\begin{figure}[!h]
\begin{center}

\includegraphics[width=0.4\textwidth]{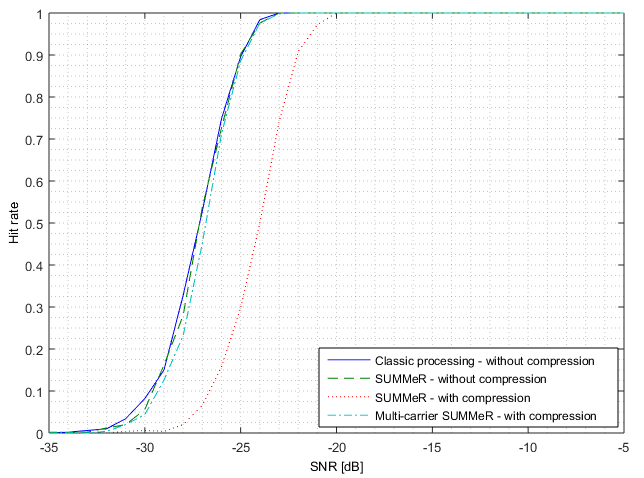}
\caption{Multi-carrier SUMMeR with spatial compression vs. classic processing and SUMMeR.}
\label{fig:duplex}
\end{center}
\end{figure}

%

\subsection{Choice of Parameters}
\label{sec:sim_param}

In the above experiments, design parameters such as antennas' locations, transmissions' carrier frequencies and Fourier coefficients were chosen at random. In \cite{cohen2016nyquist}, the impact of the joint choice of locations and carriers on range-azimuth coupling is explored. It is explained that in order to overcome the ambiguity issue, either the antenna locations or the carrier frequencies, or both, should not lie on a specific grid. It was heuristically found that a configuration with random antennas' locations with carriers on a grid provides better results than random carriers with a ULA structure.

We investigate the impact of the choice of Fourier coefficients, antennas' locations and carrier frequencies on the side lobes on the range and azimuth domains. The azimuth peak sidelobe level is governed by the antennas' locations and carriers, which are jointly chosen as to reduce the coherence of $\bf B$, defined in (\ref{eq:bigB}). To reduce the range peak sidelobe level, we manually choose Fourier coefficients and carriers that jointly lead to low coherence of $\mathbf{A}$, defined in (\ref{eq:bigA}). This procedure is complex due to the inter-connection between both parameters. The peaks sidelobe levels obtained by following these guidelines are shown to be lower than those achieved with typical random parameter selection in Figs.~\ref{fig:amb3} and \ref{fig:amb4}.
\begin{figure}
\begin{center}
\includegraphics[width=0.3\textwidth]{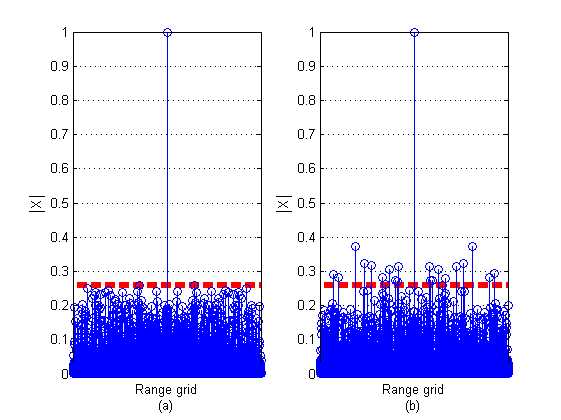}
\caption{Range-azimuth map in noiseless settings along range axis for choice of Fourier coefficients based on low coherence (left) and typical random choice (right), for $K=50$, $M=Q=20$ and $L=1$. The red dotted line indicates the peak sidelobe level for this target in the range domain.}
\label{fig:amb3}
\end{center}
\end{figure}
\begin{figure}
\begin{center}
\includegraphics[width=0.3\textwidth]{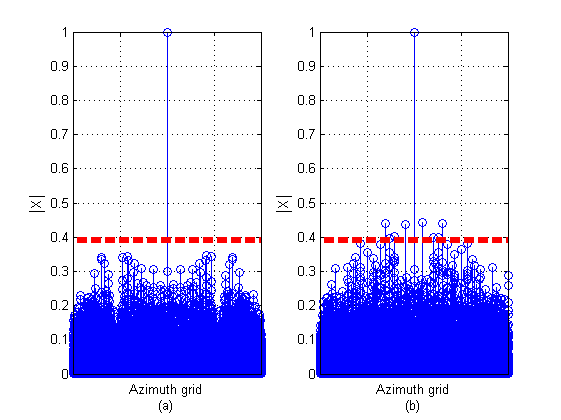}
\caption{Range-azimuth map in noiseless settings along range axis for choice of antennas' locations based on low coherence (left) and typical random choice (right), for $K=50$, $M=Q=20$ and $L=1$. The red dotted line indicates the peak sidelobe level for this target in the range domain.}
\label{fig:amb4}
\end{center}
\end{figure}

\section{Conclusion}
In this work, we presented the SUMMeR system, a sub-Nyquist MIMO radar sampling and recovery method, which exploits the concept of Xampling and Doppler focusing. This system breaks both the links between sampling rate and time resolution, and number of antennas and spatial resolution. We derived necessary conditions for range, azimuth and Doppler recovery, both with and without grid assumptions, that translate into minimal numbers of channels and samples per receiver for perfect recovery in noiseless settings. 
We compared our method with the classic Nyquist MIMO processing and showed that both the time and spatial resolution of our approach is preserved under time and spatial compression, as opposed to the traditional method. Furthermore, we proposed an enhanced version of SUMMeR, that exploits the frequency bands left vacant due to spatial compression, to recover the lost detection performance from this compression. We next investigated the impact of designs parameters, such as antennas' locations, carrier frequencies and chosen Fourier coefficients, on the detection performance. Last, our processing provides a solution to FDMA main drawbacks, range-azimuth coupling and range resolution limited to a single waveform's bandwidth, achieving the same performance as CDMA, for which wideband effects were neglected. In the presence of these, FDMA associated with our processing would outperform CDMA.


\appendices
\section{Proof Lemma \ref{lemma:Cspark}} \label{sec:app}
\textbf{Lemma \ref{lemma:Cspark}.}
Let $\mathbf{A}^m \in \mathbb{C}^{K,N}$ and $\mathbf{B}^m \in \mathbb{C}^{Q,R}$, for $0 \leq m \leq M-1$ with $K \leq N$ and $Q \leq R$. Denote $\mathbf{A} = [ \mathbf{A}^{0^T} \, \mathbf{A}^{1^T} \, \cdots \, \mathbf{A}^{(M-1)^T} ]^T$ and $\mathbf{B} = [ \mathbf{B}^{0^T} \, \mathbf{B}^{1^T} \, \cdots \, \mathbf{B}^{(M-1)^T}]^T$. Let
\begin{equation}
\mathbf{C} = \left[ 
\begin{array}{c} \mathbf{B}^0 \otimes \mathbf{A}^0 \\ \mathbf{B}^1 \otimes \mathbf{A}^1 \\ \vdots \\ \mathbf{B}^{M-1} \otimes \mathbf{A}^{M-1}
\end{array} \right].
\end{equation}
Then,
\begin{equation}
\text{spark}(\mathbf{C}) = \min \{ \text{spark}(\mathbf{A}), \text{spark}(\mathbf{B}) \}.
\end{equation}
\begin{proof}
The $MKQ \times NR$ matrix $\bf C$ can be expressed more explicitly as
\begin{equation}
\mathbf{C} = \left[ 
\begin{array}{ccc} b^0_{11} \mathbf{A}^0 & \cdots & b^0_{1R} \mathbf{A}^0 \\ \vdots & \ddots & \vdots \\ b^0_{Q1} \mathbf{A}^0 & \cdots & b^0_{QR} \mathbf{A}^0 \\ \vdots & \vdots & \vdots \\ b^{M-1}_{11} \mathbf{A}^{M-1} & \cdots & b^{M-1}_{1R} \mathbf{A}^{M-1} \\ \vdots & \ddots & \vdots \\ b^{M-1}_{Q1} \mathbf{A}^{M-1} & \cdots & b^{M-1}_{QR} \mathbf{A}^{M-1}
\end{array} \right].
\end{equation}

We first show that $\text{spark}(\mathbf{C}) \leq \min \{ \text{spark}(\mathbf{A}), \text{spark}(\mathbf{B}) \}$.
By definition of $\text{spark}(\mathbf{A})$, there exists a vector $\mathbf{x}_A \in \mathbb{C}^{N}$ with $||\mathbf{x}_A||_0 = \text{spark}(\mathbf{A})$ such that $\mathbf{A} \mathbf{x}_A =0$. Equivalently, $\mathbf{A}^m \mathbf{x}_A =0$, for all $0 \leq m \leq M-1$. Let $\mathbf{y}_A = \left[ \mathbf{x}_A^T \, 0 \, \cdots \, 0 \right]^T \in \mathbb{C}^{RN}$. Then, it holds that $\mathbf{C}\mathbf{y}_A =0$ with $||\mathbf{y}_A||_0 = ||\mathbf{x}_A||_0=\text{spark}(\mathbf{A})$. Thus, $\text{spark}(\mathbf{C}) \leq \text{spark}(\mathbf{A})$.

We now use the fact that there exists permutation matrices $\mathbf{\Pi}_1$ and $\mathbf{\Pi}_2$ such that $\mathbf{B}^m \otimes \mathbf{A}^m = \mathbf{\Pi}_1 \left(\mathbf{A}^m \otimes \mathbf{B}^m \right) \mathbf{\Pi}_2$ \cite{jokar2009sparse}, for all $0 \leq m \leq M-1$. By definition of $\text{spark}(\mathbf{B})$, there is a vector $\mathbf{x}_B \in \mathbb{C}^{R}$ with $||\mathbf{x}_B||_0 = \text{spark}(\mathbf{B})$ such that $\mathbf{B} \mathbf{x}_B =0$. Let $\mathbf{\tilde{y}}_B = \left[ \mathbf{x}_B^T \, 0 \, \cdots \, 0 \right]^T \in \mathbb{C}^{NR}$ and $\mathbf{y}_B = \mathbf{\Pi}_2^{-1} \mathbf{\tilde{y}}_B$. Rewriting $\bf C$ as
\begin{equation}
\mathbf{C} = \left[ 
\begin{array}{c} \mathbf{\Pi}_1 \left( \mathbf{A}^0 \otimes \mathbf{B}^0 \right) \mathbf{\Pi}_2 \\ \mathbf{\Pi}_1 \left( \mathbf{A}^1 \otimes \mathbf{B}^1 \right) \mathbf{\Pi}_2\\ \vdots \\ \mathbf{\Pi}_1 \left( \mathbf{A}^{M-1} \otimes \mathbf{B}^{M-1} \right) \mathbf{\Pi}_2
\end{array} \right],
\end{equation}
we have $\mathbf{C}\mathbf{y}_B=0$ with $||\mathbf{y}_B||_0 = ||\mathbf{x}_B||_0=\text{spark}(\mathbf{B})$. Therefore, $\text{spark}(\mathbf{C}) \leq \text{spark}(\mathbf{B})$.

We now show that $\text{spark}(\mathbf{C}) \geq \min \{ \text{spark}(\mathbf{A}), \text{spark}(\mathbf{B}) \}$. Assume first that $\text{spark}(\mathbf{A}) \geq \text{spark}(\mathbf{B})$. Then, we need to show that $\text{spark}(\mathbf{C}) \geq \text{spark}(\mathbf{B})$. 
Indeed, every column of $\bf C$ has the form
\begin{equation}
\mathbf{c}_{w_j} = \left[ \begin{array}{c}
\mathbf{b}_{v_j}^0 \otimes \mathbf{a}_{u_j}^0 \\ \mathbf{b}_{v_j}^1 \otimes \mathbf{a}_{u_j}^1 \\ \vdots \\ \mathbf{b}_{v_j}^{M-1} \otimes \mathbf{a}_{u_j}^{M-1}
\end{array} \right],
\end{equation}
for $0 \leq w_j \leq NR-1$, $0 \leq u_j \leq N-1$, and $0 \leq v_j \leq R-1$.
Suppose by contradiction that
\begin{equation}
\text{spark}(\mathbf{C}) = \ell < \text{spark}(\mathbf{B}). 
\end{equation}
In particular, this implies that any set of $\ell$ columns of $\mathbf{B}$ is linearly independent, while there exist scalars $\lambda_1, \dots, \lambda_{\ell}$ not all $0$ and indices $u_1, \dots, u_{\ell}$ and $v_1, \dots, v_{\ell}$ where $v_i \neq v_j$ for all $i \neq j$ such that
\begin{equation}
\label{eq:lemma_sum}
\sum_{j=1}^{\ell} \lambda_j \mathbf{c}_{w_j} = \sum_{j=1}^{\ell} \left[ \begin{array}{c}
\left(\lambda_j \mathbf{b}_{v_j}^0 \right) \otimes \mathbf{a}_{u_j}^0 \\
\left(\lambda_j \mathbf{b}_{v_j}^1 \right) \otimes \mathbf{a}_{u_j}^1 \\ \vdots \\
\left(\lambda_j \mathbf{b}_{v_j}^{M-1} \right) \otimes \mathbf{a}_{u_j}^{M-1}
\end{array} \right] = \mathbf{0}.
\end{equation}
In (\ref{eq:lemma_sum}), each index $u_j$ can appear multiple times. Without loss of generality, we assume that the indices $u_j$ are numbered in increasing order, so that
\begin{equation}
\underbrace{u_1= \dots = u_{k_1}}_{g_1} < \dots < \underbrace{u_{k_t+1}= \dots = u_{k_t}}_{g_t},
\end{equation}
with $1 \leq t \leq \ell$.
Therefore, we have
\begin{equation} 
\sum\limits_{i=1}^t\left[ \begin{array}{c}
\left( \sum\limits_{j=k_{i-1}+1}^{k_i} \lambda_j \mathbf{b}_{v_j}^0 \right) \otimes \mathbf{a}_{g_i}^0 \\
\left( \sum\limits_{j=k_{i-1}+1}^{k_i} \lambda_j \mathbf{b}_{v_j}^1 \right) \otimes \mathbf{a}_{g_i}^1 \\ \vdots \\
\left( \sum\limits_{j=k_{i-1}+1}^{k_i} \lambda_j \mathbf{b}_{v_j}^{M-1} \right) \otimes \mathbf{a}_{g_i}^{M-1}
\end{array} \right] 
= 0,
\end{equation}
where $k_0=0$ and $k_t = \ell$. Since $\text{spark}(\mathbf{A}) > \ell$, the vectors $\mathbf{a}_{g_1}, \dots, \mathbf{a}_{g_{\ell}}$ are linearly independent. It follows that
\begin{equation}
\label{eq:lemma_last}
\sum\limits_{j=k_{i-1}+1}^{k_i} \lambda_j \left[ \begin{array}c
\mathbf{b}_{v_j}^0 \\ \mathbf{b}_{v_j}^1 \\ \vdots \\ \mathbf{b}_{v_j}^{M-1} \end{array}
\right] =0,
\end{equation}
for $1 \leq i \leq t$.
Since the sum in (\ref{eq:lemma_last}) is over at most $\ell$ columns of $\bf B$, this contradicts the assumption that $\text{spark}(\mathbf{B}) > \ell$.

Finally, assume $\text{spark}(\mathbf{B}) \geq \text{spark}(\mathbf{A})$. We then need to show that $\text{spark}(\mathbf{C}) \geq \text{spark}(\mathbf{A})$. This can be proved similarly to the previous case by writing the columns of $\bf C$ as
\begin{equation}
\mathbf{c}_{w_j} = \left[ \begin{array}{c}
\mathbf{\Pi}_1 \left( \mathbf{b}_{v_j}^0 \otimes \mathbf{a}_{u_j}^0 \right) \\  \mathbf{\Pi}_1 \left( \mathbf{b}_{v_j}^1 \otimes \mathbf{a}_{u_j}^1 \right) \\ \vdots \\ \mathbf{\Pi}_1 \left( \mathbf{b}_{v_j}^{M-1} \otimes \mathbf{a}_{u_j}^{M-1} \right)
\end{array} \right],
\end{equation}
for $0 \leq w_j \leq NR-1$, $0 \leq u_j \leq N-1$, $0 \leq v_j \leq R-1$ and where $\mathbf{\Pi}_1$ is an appropriate permutation matrix.

\end{proof}

\small  
\bibliographystyle{IEEEtran}
\bibliography{Bib}

\end{document}